\documentclass{article}

\usepackage[english]{babel}

\usepackage[left=1.3in,right=1.3in,top=1.2in,bottom=1.2in]{geometry}

\usepackage{amsmath}
\usepackage{amssymb}
\usepackage{amsfonts}
\usepackage{graphicx}


\usepackage[round]{natbib}
\usepackage{soul}
\usepackage{url}
\usepackage[hidelinks]{hyperref}
\usepackage[utf8]{inputenc}
\usepackage{amsthm}
\usepackage{booktabs}
\usepackage{algorithm}
\usepackage{bbm}

\newtheorem{theorem}{Theorem}

\newtheorem{corollary}{Corollary}
\newtheorem{example}{Example}

\newtheorem{proposition}{Proposition}
\newtheorem{definition}{Definition}

\usepackage{mathtools}
\usepackage{todonotes}

\usepackage{dsfont}

\newcommand{\naturals}{{{\mathbb{N}}}}

\newcommand{\CC}{\mathcal{C}}
\newcommand{\G}{\mathcal{G}}

\newcommand{\ie}{i.e.,\xspace}
\newcommand{\eg}{e.g.,\xspace}

\newcommand{\col}{m}
\newcommand{\ind}{\mathds{1}}
\newcommand{\indmod}{\ind((\col-1)\nmid(q-1))}

\usepackage{pgf, tikz} 
\usetikzlibrary{arrows, automata} 
\usepackage{cleveref}
\usepackage{multirow}

\usepackage{diagbox}

\usepackage{tabularx}

\usepackage{pifont}
\newcommand{\no}{\color{red}{\ding{55}}}

\allowdisplaybreaks

\usepackage{graphicx}
\usepackage{pgfplots}

\urlstyle{same}
\usepackage{authblk}

\usepackage[ruled,vlined,linesnumbered,algo2e]{algorithm2e}

\usepackage[font={footnotesize},labelfont={sc}]{caption} 

\pgfplotsset{compat=1.18}

\title{Quantifying Core Stability Relaxations in Hedonic Games\footnote{A previous version of this paper appeared under the title ``Relaxed Core Stability for Hedonic Games with Size-Dependent Utilities'' at the 48th International Symposium on Mathematical Foundations of Computer Science (MFCS 2023).}}
\author[1]{Tom Demeulemeester}
\author[2]{Jannik Peters}
\affil[1]{Department of Economics, Faculty of Business and Economics (HEC), University of Lausanne}
\affil[2]{School of Computing, National University of Singapore}

\begin{document}
	\maketitle
	
	\begin{abstract}
		We study relationships between different relaxed notions of core stability in hedonic games, which are a class of coalition formation games. Our unified approach applies to a newly introduced family of hedonic games, called $\alpha$-hedonic games, which contains previously studied variants such as fractional and additively separable hedonic games. In particular, we derive an upper bound on the maximum factor with which a blocking coalition of a certain size can improve upon an outcome in which no deviating coalition of size at most $q$ exists. Counterintuitively, we show that larger blocking coalitions might sometimes have lower improvement factors. We discuss the tightness conditions of our bound, as well as its implications on the price of anarchy of core relaxations. Our general result has direct implications for several well-studied classes of hedonic games, allowing us to prove two open conjectures by \citet{FMM21a} for fractional hedonic games. 
	\end{abstract}
	
	\section{Core Stability in Hedonic Games}
	Coalition formation is one of the core topics of multiagent systems and algorithmic game theory. Hedonic games \citep{DrGr80a} constitute one of the most popular subcases of coalition formation. In a hedonic game, the goal is to divide a set of agents having preferences over subsets of the other agents into disjoint coalitions respecting these preferences. Over the years, multiple different ways of representing the agents' preferences and multiple different solution concepts emerged. Among the strongest solution concepts is core stability \citep{BoJa02a}: a coalition structure is core stable, if no subset of agents could together form a new coalition in which they are all better off than in the original coalition structure. While being a seemingly natural concept, core stable outcomes may fail to exist, even for very simple preference structures \citep{ABB+19a, ABS13a}.
	In these structures, it is also often computationally intractable to decide whether a core stable outcome exists \citep{SuDi10a, PeEl15a}.\footnote{For some preference structures, it can even be hard to find a coalition structure in the core, even if it is guaranteed to exist, see for instance the paper of \citet{BuKo21a}.} One standard approach to deal with these impossibilities is through approximation, see for instance \citet{CMS24b} or \citet{CJMW20a} for recent examples in clustering and multiwinner voting.
	For hedonic games, \citet{FMM21a} were the first to introduce two natural approximate notions of core stability for hedonic games: (i) $q$-size (core) stability, which requires that no blocking coalitions of size at most $q$ exist, (ii) $k$-improvement (core) stability, which requires that no blocking coalition exists in which every agent improves by a factor of more than $k$. They further applied these notions to the special case of fractional hedonic games \citep{ABB+19a} and showed that a $2$-size stable outcome and a $2$-improvement stable outcome always exist. Fractional hedonic games, however, constitute only one of many possible special cases of possible hedonic games. 
	
	In this paper, we contribute to this literature in three ways. First, we propose a new family of hedonic games, called $\alpha$-hedonic games, in which the utility an agent receives from being in a coalition of size $\col$ is equal to the sum of the cardinal utilities it ascribes to the other agents in that coalition, multiplied by a factor $\alpha_\col$ which depends on the size of that coalition. Several well-studied classes of hedonic games, such as fractional hedonic games \citep{ABB+19a}, modified fractional hedonic games \citep{Olse12a}, and additively separable hedonic games \citep{BoJa02a} are a special case of $\alpha$-hedonic games. While these classes have mostly been studied independently in the literature, the unified techniques that we propose expose the inherent structural similarities between them. Additionally, our unified approach facilitates the study of core stability in previously unstudied variants of $\alpha$-hedonic games, as illustrated in \Cref{sec:novel}.
	
	Second, we further study the relationship between the two relaxations of core stability that were introduced by \citet{FMM21a}. Our main result quantifies, for any $\alpha$-hedonic game and for any approximately $q$-size stable outcome, the maximum factor with which the agents can improve their utility by forming a blocking coalition of size $\col\geq q+1$. As a corollary, this allows us to prove two conjectures by \citet{FMM21a}: (i) every $q$-size stable outcome is $\frac{q}{q-1}$-improvement stable for fractional hedonic games and (ii) the $q$-size core price of anarchy (i.e., the worst-case approximation to the social welfare of any $q$-size core stable outcome) is exactly $\frac{2q}{q-1}$ for fractional hedonic games. 
	
	Third, we introduce the use of mathematical programming techniques to gain insights into hedonic games. To help us identify the rather counterintuitive bound in our main result, we formulated an integer linear program that computed instances allowing for ``extreme" blocking coalitions in which all agents improved by a given factor. Simultaneously, the generated instances can be used to illustrate the tightness of our results.
	
	\paragraph*{Related Work}
	Since its inception, hedonic games have been a widely studied topic in algorithmic game theory, with several works studying axiomatic or computational properties of hedonic games. For an overview on earlier developments, we refer the reader to the book chapter by~\citet{AzSa15a}. In recent years, several new models and optimality notions for hedonic games have been introduced and analyzed. Among the most popular of these models are the aforementioned fractional hedonic games, introduced by \citet{ABB+19a}, and studied in various forms by, e.g., \citet{BFF+18a}, \citet{AGG+15b}, or \citet{CMM19a}. Fractional hedonic games are also related to the model of hedonic diversity games \citep{BEI19a, BoEl20a, GHK+23a} in which agents possess types and derive utility based on the fraction of agents of their own type in their coalition.
	
	The paper closest to ours is the work by~\citet{FMM21a}, who introduced the aforementioned notions of $q$-size and $k$-improvement core stability for fractional hedonic games. Alternative simplifications of core stability were introduced by \citet{CMM19a}, who studied a local variant of core stability for simple fractional hedonic games, i.e., hedonic games in which all utility values are either $0$ or $1$. In their local variant of core stability, the agents deviating are required to form a clique. For this weakened notion, they show that core stable outcomes always exist and can be computed via improving response dynamics.
	
	Hedonic games in general are a widely studied topic in both computer science and operations research. Earlier work has focused on the computational complexity of computing or verifying stable coalition structures with both tasks frequently being computationally intractable \citep{GaSa19a, SuDi10a}. Stability notions from hedonic games have recently found application in the study of fair clustering, with \citet{AAK+22a} and \citet{ACL+23a} studying a variant of individual stability and modified fractional hedonic games, while \citet{CMS24b} study core stability in a non-centroid fair clustering setting. The main difference between the clustering and hedonic games setting is that for clustering problems the number of clusters is typically fixed, while for hedonic games any arbitrary number of coalitions can be formed. In the context of bandit algorithms, \citet{CPM23a} used an approach based on hedonic games to cluster users for a recommendation task. \citet{AGIM24a} study a hedonic games setting in which the participating agents are sharing their resources with the other agents in their coalition. In this setting, \citeauthor{AGIM24a} show that, in this setting, the existence of stable coalition structures can be guaranteed if the sharing rule used in coalitions follows their principle of ``solidarity''.

	\section{Preliminaries}
	For any $n \in \mathbb{N}^+$ and $\alpha\colon \mathbb{N}_{\ge 1} \to \mathbb{R}_{>0}$ an \emph{$\alpha$-hedonic game} ($\alpha$HG) consists of a set of \emph{agents} $A = \{a_1, \dots, a_n\}$ with a \emph{utility function} $u\colon A \times A \to \mathbb{R}$. We restrict ourselves to \emph{symmetric} $\alpha$-hedonic games (S-$\alpha$HGs) in this paper, and require that $u(i,j) = u(j,i)$ for all $i,j \in A$. A \emph{coalition} is a subset of $A$ and a \emph{coalition structure} is a partition of $A$ into coalitions. The utility of an agent $i$ in a coalition $C$ is $u_i(C) \coloneqq \sum_{j \in C} \alpha(\lvert C \rvert) \cdot u(i,j)$. We assume that $u(i,i) = 0$. For a coalition structure $\mathcal C$ the utility $u_i(\mathcal{C})$ of the coalition structure for agent $i$ is the utility of the coalition agent $i$ belongs to. To simplify notation, for agents $a_i$ and $a_j$ and $C\subseteq A$ we also write $u_i(j) \coloneqq u(i, j)$ and $u_i(C) \coloneqq u_{a_i}(C)$.
	
	The utility that an agent experiences from a coalition of size $\col$ in an $\alpha$HG is the sum of the utilities for the other agents in that coalition, weighted by a factor $\alpha(\col)$, which only depends on the size of the coalition. As such, the class of $\alpha$-hedonic games generalizes multiple previously studied hedonic game classes, e.g.,
	\begin{itemize}
		\item \emph{Symmetric additively separable hedonic games} (S-ASHGs) with $\alpha(\col) = 1$ for any $\col \in \mathbb{N}$.
		\item \emph{Symmetric fractional hedonic games} (S-FHGs) with $\alpha(\col) = \frac{1}{\col}$ for any $\col \in \mathbb{N}$.
		\item \emph{Symmetric modified fractional hedonic games} (S-MFHGs) with $\alpha(\col) = \frac{1}{\col-1}$ for any $\col \in \mathbb{N}^+$ and $\alpha(1) = 0$.
	\end{itemize}
	We only study \emph{symmetric} hedonic games (see \Cref{ex:intro}). To improve readability, we omit the prefix ``S-" in the abbreviations of the different hedonic game classes in the remainder of the paper.
	
	A given coalition structure $\mathcal C$ is 
	\begin{itemize}
		\item \emph{core stable} if for any coalition $C$ it holds that $u_i(C) \le u_i(\mathcal{C})$ for at least one $i \in C$;
		\item \emph{$q$-size core stable} if for any coalition $C$ with $\lvert C \rvert \le q$ it holds that $u_i(C) \le u_i(\mathcal{C})$ for at least one $i \in C$;
		\item  \emph{$k$-improvement core stable} if for any coalition $C$ it holds that $u_i(C) \le k u_i(\mathcal{C})$ for at least one $i \in C$;
		\item  \emph{$(q,k)$-core stable} if for any coalition $C$ with $\lvert C \rvert = q$ it holds that $u_i(C) \le k u_i(\mathcal{C})$ for at least one $i \in C$.
	\end{itemize}
	Note that $q$-size core stability is equivalent to being $(\col,1)$-core stable for $1\leq \col\leq q$. Similarly, a coalition is $k$-improvement stable if and only if it is $(q,k)$-core stable for all values of $q$. If there is a coalition witnessing a violation to one of these criteria, for instance a coalition $C$ with $u_i(C) > u_i(\mathcal{C})$ for all $i \in C$, we say that $C$ is a \emph{blocking coalition}. In further parts of the paper, we shorten $\alpha(\col)$ to $\alpha_\col$ to increase readability. Given two integers~$a$ and~$b$, we denote that~$a$ is divisible by~$b$ as $a\mid b$, and that~$a$ is not divisible by~$b$ as $a\nmid b$. Moreover, denote by~$\ind\colon\{\text{true},\text{false}\}\to\{0,1\}$ the indicator function such that for any boolean condition $x$
	\begin{equation*}
		\ind(x) = \begin{cases}
			1 &\text{if } x \text{ is true},\\
			0 &\text{if } x \text{ is false}.
		\end{cases}
	\end{equation*}
	Lastly, every $\alpha$HG can be represented by a graph $G(A,E,w)$, where $A$ represents the set of agents, and $E$ contains an undirected edge $\{i,j\}$ between agents $i$ and $j$ with weight $w_{ij} = u(i,j) = u(j,i)$ if $u(i,j)>0$. Alternatively, given a coalition $C\subseteq A$, we denote the subgraph of $G(A,E,w)$ that is induced by only considering the agents in $C$ by $G(C)$. 
	
	Before turning to our result, we present two simple examples, with the second example motivating why we exclusively focus on symmetric instances.
	\begin{figure}[h]
		\centering
		\begin{tikzpicture}[
			> = stealth, 
			shorten > = 1pt, 
			auto,
			node distance = 1.5cm, 
			semithick 
			]
			\tikzstyle{every state}=[
			draw = black,
			fill = white,
			minimum size = 4mm
			]
			\node[state] (a4) at (0,3) {$a_4$};
			\node[state] (a3) at (0,0) {$a_3$};
			\node[state] (a2) at (3,0) {$a_2$};
			\node[state] (a1) at (3,3) {$a_1$};
			
			\path[-] (a1) edge node {$3$} (a2);
			\path[-] (a2) edge  node {$3$} (a3);
			\path[-] (a3) edge  node {$3$} (a4);
			\path[-] (a4) edge  node {$3$} (a1);
			\path[-] (a3) edge  node {$2$} (a1);
		\end{tikzpicture}
		\hspace{1cm}
		\begin{tikzpicture}[
			> = stealth, 
			shorten > = 1pt, 
			auto,
			node distance = 1.5cm, 
			semithick 
			]
			\tikzstyle{every state}=[
			draw = black,
			fill = white,
			minimum size = 4mm
			]
			
			\node[state] (a3) at (0,0) {$a_3$};
			\node[state] (a2) at (3,0) {$a_2$};
			\node[state] (a1) at (1.5,2) {$a_1$};
			
			\path[->] (a1) edge [bend left] node {$M$} (a2);
			\path[->] (a2) edge [bend left] node {$M$} (a3);
			\path[->] (a3) edge [bend left] node {$M$} (a1);
			
			\path[->] (a2) edge node {$0$} (a1);
			\path[->] (a3) edge node {$0$} (a2);
			\path[->] (a1) edge node {$0$} (a3);
			\node[] at (-0.85, -0.75) {$u_3(\CC)=1$};
			\node[] at (3.75, -0.75) {$u_2(\CC)=1$};
			\node[] at (1.45, 2.65) {$u_1(\CC)=1$};
		\end{tikzpicture}
		\caption{Example of a symmetric hedonic game on the left and of a blocking coalition for an asymmetric game on the left, for which the improvement ratio is unbounded.}
		\label{fig:ex}
	\end{figure}
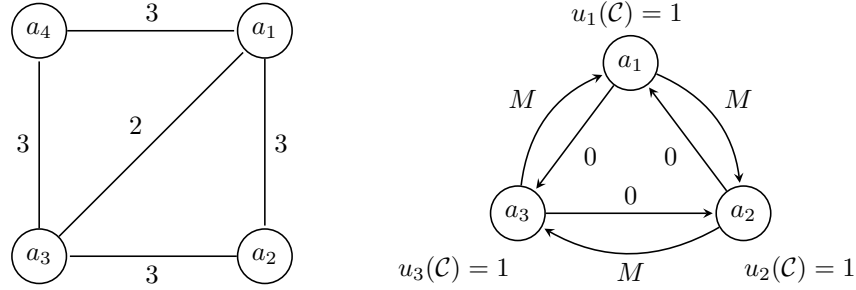
	\begin{example}
		\label{ex:intro}
		First, consider the hedonic game induced by the graph on the left of Figure \ref{fig:ex}, with utilities as indicated by the edges and omitted edges indicating a utility of $0$. Here, consider the coalition structure $\{\{a_1, a_2\}, \{a_3, a_4\}\}$. In an ASHG (i.e., additively separable), the utility of every agent would be $3$, and the coalition structure would be $2$-size core stable, but not $3$-size core stable, since, for instance, $\{a_1, a_2, a_3\}$ would block. Further, the coalition structure is $(3, \frac{5}{3})$- and $(4, 2)$-core stable and thus also $2$-improvement core stable.
		In a FHG, on the other hand, the utility of every agent would be $\frac{3}{2}$ and the coalition $\{a_1, a_2, a_3\}$ would still block. The coalition consisting of all agents, however, would no longer be blocking, since the utility of agent $a_2$ would be $\frac{6}{4} = \frac{3}{2}$, which was their utility in the original coalition structure. 
		Finally, in a MFHG, the utility of every agent would be $3$ and the coalition structure would be core stable. Even the coalition $\{a_1, a_2, a_3\}$ would no longer block, since the utility of agent $a_1$ would be $\frac{5}{2} < 3$. 
		
		Secondly, to motivate the choice of symmetric hedonic games, consider the (asymmetric) hedonic game depicted on the right of \Cref{fig:ex} with all three agents originally being in a coalition structure $\CC$ in which they experience utility $1$. This coalition structure would be $2$-size core stable. However, there is no upper bound on the improvement ratio for the coalition consisting of all three agents, as $M$ goes to infinity. We note that this behaviour can be observed independently of the considered $\alpha$ function. 
	\end{example}
	
	\subsection{Our results}
	~\citet{FMM21a} conjectured that for fractional hedonic games, every $q$-size core stable coalition structure is also $\frac{q}{q-1}$-improvement core stable. We refine this conjecture and show that every $q$-size core stable coalition structure is \begin{align}\left(\col, 1 + \frac{\left\lfloor \frac{1}{q-1} (\col-2)\right\rfloor}{\col}\right)\text{-core stable} 
		\label{eq:frac_bound}\end{align} for any $\col \ge q + 1$. As $1 + \frac{\left\lfloor \frac{1}{q-1} (\col-2)\right\rfloor}{\col} \le \frac{q}{q-1}$ for any $\col$ this implies the conjecture of \citeauthor{FMM21a}. Further, this result together with the results of \citeauthor{FMM21a} also allows us to confirm their second conjecture that the \emph{price of anarchy} of $q$-size stability is exactly $\frac{2q}{q-1}$.
	
	To gain a better intuition of this quite unhandy term, we refer the reader to \Cref{ta:bounds} and the left panel of \Cref{fig:values}.
	In particular, note that for $q > 2$ this bound is not monotone. We provide some intuition behind this non-monotonicity in \Cref{sec:intuition}. 
	
	\begin{table}[ht]
		
		\begin{center}
			\caption{For the given combinations of $q$ and $\col$, the table contains the value $f(q,\col)$ derived from \Cref{eq:frac_bound}, such that a $q$-size core stable is $(\col, f(q,\col))$-stable in FHGs. \label{ta:bounds}}
			\begin{tabular}{l c c c c c c c c |c} \toprule
				\backslashbox{$q$}{$\col$}
				& $3$ &$4$& $5$ & $6$ & $7$ & $8$ & $9$ & $\ldots$ & $\frac{q}{q-1}$ \\\midrule
				$2$ & $\frac{4}{3}$ & $\frac{6}{4}$ & $\frac{8}{5}$ & $\frac{10}{6}$ & $\frac{12}{7}$ & $\frac{14}{8}$ & $\frac{16}{9}$ &
				&
				$2$\\
				\rule[-1.5ex]{0pt}{5ex}$3$ & -- & $\frac{5}{4}$ & $\frac{6}{5}$ & $\frac{8}{6}$ & $\frac{9}{7}$ & $\frac{11}{8}$ & $\frac{12}{9}$ & 
				$\ldots$ &
				$\frac{3}{2}$\\
				\rule[-1.5ex]{0pt}{5ex}$4$ & -- & -- & $\frac{6}{5}$ & $\frac{7}{6}$ & $\frac{8}{7}$ & $\frac{10}{8}$ & $\frac{11}{9}$ &
				&
				$\frac{4}{3}$\\
				\bottomrule
			\end{tabular}

		\end{center}
	\end{table}
	
	\begin{figure}[t]
		\centering
		\includegraphics[width=0.75\linewidth]{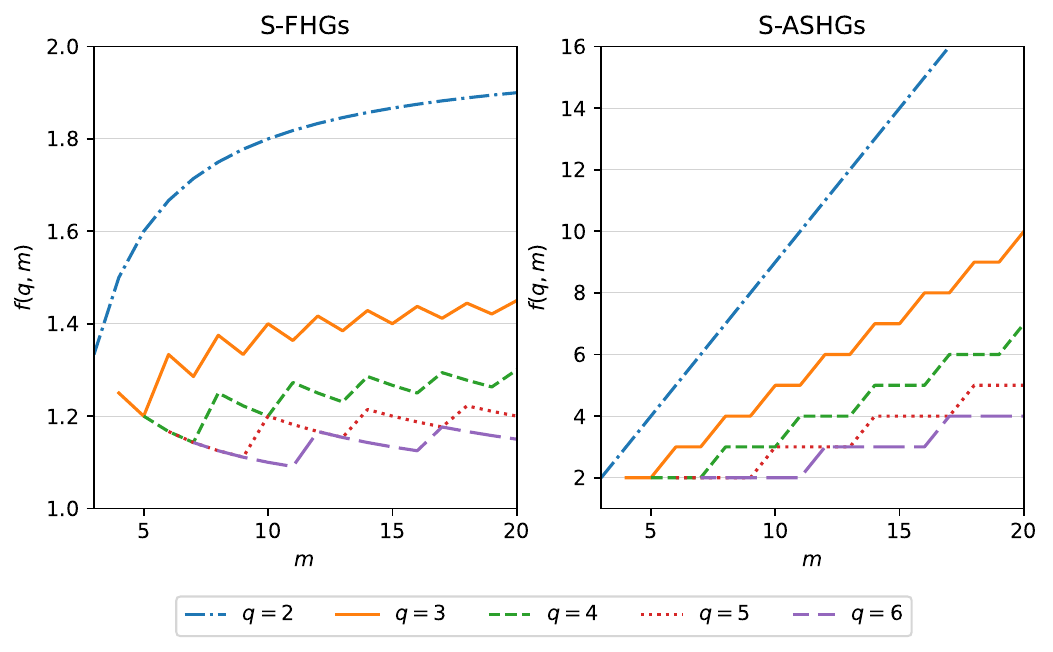}
		\caption{Plotted values of $f(q,\col)$ for FHGs (left) and ASHGs (right) such that every $q$-size core stable coalition structure is $(\col,f(q,\col))$-core stable.}
		\label{fig:values}
	\end{figure}

	In fact, our proof does not only apply to FHGs, but to all symmetric $\alpha$-hedonic games. The more general result that we are able to show is that every coalition structure in an $\alpha$HG that is $k$-improvement stable for coalitions of size $1,\ldots,q$ is also $(\col, f(q,\col))$-core stable with 
	\begin{align*}
		&f(q,\col) = \max\left(1,k\left(\left\lfloor\frac{\col-1}{q-1}\right\rfloor\frac{\alpha_\col}{\alpha_q}+\frac{\indmod\alpha_\col}{\alpha\left((\col-1)\bmod(q-1)+1\right)}\right)\right).
	\end{align*} 
	As we discuss in Section \ref{sec:applications}, this bound is equivalent to \Cref{eq:frac_bound} for FHGs when $k=1$. For ASHGs this implies a bound where 
	\(    
	f(q,\col) = 1+\left\lfloor\frac{\col-2}{q-1}\right\rfloor
	\) when $k=1$, as illustrated in the right panel of \Cref{fig:values}. Moreover, for MFHGs, this implies the result by Monaco et al.\ \cite[][Theorem 14]{MMV20a} that a core stable coalition structure always exists.
	
	Further, in \Cref{sec:lowerBounds} we derive lower bounds on these values as well, and show tightness for various combinations of $(q,\col)$, and for various types of hedonic games. A summary of our results can be found in \Cref{ta:summary}.
	
	To justify the study of $q$-size stability, consider the following preliminary result, which guarantees the existence of a 2-size stable coalition structure in $\alpha$HGs for any possible $\alpha$ function.
	
	\begin{proposition}
		\label{prop:2_size_exists}
		There exists a 2-size stable coalition structure in every $\alpha$HG.
	\end{proposition}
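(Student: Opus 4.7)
The plan is to construct a 2-size stable coalition structure explicitly via a greedy matching on the positive-weight subgraph and then verify that neither singleton nor pair deviations can block. Concretely, restrict attention to the subgraph of $G(A,E,w)$ whose edges are precisely those pairs $\{i,j\}$ with $u(i,j) > 0$, sort these edges in non-increasing order of weight (breaking ties arbitrarily), and process them greedily, adding $\{i,j\}$ to a matching $M$ whenever both $i$ and $j$ are still unmatched. Output the coalition structure $\mathcal{C}^\ast$ that puts each matched pair together as a coalition and each unmatched agent in its own singleton coalition.

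With this construction in hand, the next step is to observe that every agent obtains non-negative utility in $\mathcal{C}^\ast$, since only strictly positive-weight edges are ever inserted into $M$ and singletons contribute utility $0$. This immediately rules out singleton deviations, because $u_i(\{i\}) = 0 \le u_i(\mathcal{C}^\ast)$ holds for every agent $i$.

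The core of the argument is then to rule out blocking pairs $\{i,j\}$. Three cases arise. If $u(i,j) \le 0$, then $u_i(\{i,j\}) = \alpha_2 u(i,j) \le 0 \le u_i(\mathcal{C}^\ast)$, and the pair is not improving. If $u(i,j) > 0$ and $\{i,j\} \in M$, then $\{i,j\}$ is already a coalition of $\mathcal{C}^\ast$ and the non-strict inequality $u_i(\{i,j\}) \le u_i(\mathcal{C}^\ast)$ is trivially satisfied. Otherwise $u(i,j) > 0$ but $\{i,j\}$ was rejected by the greedy procedure, which means that at the moment $\{i,j\}$ was considered at least one endpoint (say $i$) was already matched to some $i'$, via an edge inserted earlier in the ordering; this ordering guarantees $u(i,i') \ge u(i,j)$. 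Multiplying by $\alpha_2 > 0$ yields $u_i(\mathcal{C}^\ast) = \alpha_2 u(i,i') \ge \alpha_2 u(i,j) = u_i(\{i,j\})$, so the pair does not strictly block.

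The main subtlety I anticipate lies precisely in this last case, where both $i$ and $j$ are already matched. A naive maximum-weight-matching argument would compare the gain from swapping $\{i,i'\}$ and $\{j,j'\}$ for $\{i,j\}$ and could only produce an averaged inequality of the form $2 u(i,j) > u(i,i') + u(j,j')$, which is strictly weaker than the pointwise comparison needed for 2-size stability. Greedy processing circumvents this issue entirely: whichever of $\{i,i'\}$ and $\{j,j'\}$ was processed before $\{i,j\}$ already carries weight at least $u(i,j)$, which is exactly the pointwise guarantee that the definition of 2-size stability demands.
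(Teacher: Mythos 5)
Your proof is correct and follows essentially the same route as the paper's: both construct the coalition structure by greedily matching agents along the highest-weight non-negative (resp.\ positive) edges and then observe that the greedy order rules out blocking pairs while the non-negativity of matched edges rules out singleton deviations. Your write-up is merely more explicit about the case analysis, in particular in spelling out why the greedy ordering gives the pointwise guarantee $u(i,i')\ge u(i,j)$ rather than only an averaged one.
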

	\begin{proof}
		The proof follows a similar idea as the proof of Theorem 14 by \citet{MMV20a} to show that the core of MFHGs is not empty. Consider an $\alpha$HG represented by a graph $G(A,E,w)$. Denote by $G^0(A,E^0,w)$ the modified graph in which all edges with negative weights are removed, \ie $E^0=\{e\in E: w_e\geq 0\}$. Define the coalition structure $\CC$ by greedily pairing the unmatched agents with the highest weights in $G^0$ (breaking ties arbitrarily), and by leaving the remaining agents unmatched. Note the two agents $i,j\in A$ cannot be matched if $\{i,j\}\notin E^0$. By definition of a greedy matching, $\CC$ is not strictly blocked by a coalition of size 2. Further, no matched agent strictly prefers to stay unmatched: For every agent~$i\in A$, being unmatched yields utility $\alpha_1 u(i,i) = 0$, while being matched yields a non-negative utility as $\alpha_2 \geq 0$ and $w_e\geq 0$ for all $e\in E^0$. 
	\end{proof}
	
	\section{Methodology}
	\label{sec:method}
	As displayed in \Cref{fig:values}, the bound in our main result can be rather counterintuitive: larger blocking coalitions might experience a smaller improvement factor for several $\alpha$-hedonic games, including FHGs. To help us identify the bound in our main result, we have formulated an integer linear program. This formulation generated instances that contained a $q$-size stable coalition structure, and that allowed for blocking coalitions of size $\col$ that were $(\col, \gamma)$-core stable, with $\col\geq q+1$. For FHGs, for example, \Cref{ta:bounds} displays the largest values of $\gamma$ for which this integer linear program managed to find feasible solutions. Note that for larger values of $\gamma$, our integer linear program could often not prove infeasibility.
	
	Assume we are given a $q$-size stable coalition structure $\CC$ in an $\alpha$HG in which each agent $i\in A$ experiences utility $u_i(\CC)$. Our goal is to evaluate whether utilities exist such that we can identify a blocking coalition $C\subseteq A$ with $|C|\geq q+1$ in which all agents improve by at least a factor $\gamma$ with respect to $\CC$. Let auxiliary variables $y_{S,i}$ be one if agent $i$ in subset $S\subseteq C$, with $|S|\leq q$, does not improve with respect to $\CC$, and zero otherwise. Then, the coalition $C$ is $(|C|,\gamma)$-blocking if and only if the following system of linear equations is feasible.
	
	\begin{align}
		&&\sum_{j\in S\setminus\{i\}}u(i,j) - \frac{u_i(\CC)}{\alpha_{|S|}} &\leq M (1-y_{S,i}) &\forall S\subseteq C:|S|\leq q, \forall i\in S\label{eq:IP1}\\
		&&\sum_{i\in S}y_{S,i} &\geq 1 \qquad &\forall S\subseteq C: |S|\leq q\label{eq:IP2}\\
		&&u(i,j) &= u(j,i) \qquad &\forall\{i,j\}\subseteq C\label{eq:IP3}\\
		& &\sum_{j\in C\setminus\{i\}}u(i,j) &> \frac{\gamma u_i(\CC)}{\alpha_{|C|}} \qquad &\forall i\in C\label{eq:IP4}\\
		&&y_{S,i}&\in\{0,1\} \qquad &\forall  S\subseteq C:|S|\leq q, \forall i\in S\label{eq:IP5}\\
		&&u(i,j) &\in \mathbb{R} &\forall \{i,j\}\subseteq C\label{eq:IP6}\\
		&&u_i(\CC) &\in \mathbb{R} &\forall i\in C \label{eq:IP7}
	\end{align}
	Inequalities (\ref{eq:IP1}) and (\ref{eq:IP2}) ensure that the choice of utilities does not violate the fact that $\CC$ is $q$-size core stable, where $M$ represents a large number. Jointly, inequalities (\ref{eq:IP1}) and (\ref{eq:IP2}) enforce that in each subset $S\subseteq C$ of size at most $q$, there is at least one agent who does not improve with respect to $\CC$. While constraints (\ref{eq:IP3}) enforce the resulting $\alpha$-hedonic game to remain symmetric, constraints (\ref{eq:IP4}) impose that all agents in $C$ improve by a factor of at least $\gamma$, and that it is therefore a $(|C|,\gamma)$-blocking coalition. 
	
	We have used this formulation not only to obtain the bound in the main result, but also to guide the tightness proofs of this bound in \Cref{sec:lowerBounds}. For the cases when a theoretical lower bound could not be established, the generated instances when setting the parameters $|C|=\col$ and $\gamma = f(q,\col)$ could illustrate the tightness of our result (\Cref{ta:additional_values}). Note that while strict inequalities are not allowed by linear solvers, this can be resolved by deducting a small value on one side of the inequality. The corresponding C\texttt{++} code that solves this formulation using Gurobi, together with the tight instances in \Cref{ta:additional_values}, is available online at \url{https://github.com/DemeulemeesterT/Relaxations-Core-Stability-Alpha-Hedonic-Games}.

	\section{Main result}
	\label{sec:main}
	We begin with our main result, which quantifies the relationship between the two considered relaxed notions of core stability in symmetric $\alpha$-hedonic games.
	\begin{theorem}
		\label{th:alphaHG}
		Any coalition structure in an $\alpha$HG that is $k$-improvement core stable for coalitions of size $1,\ldots,q$ is also $(\col, f(q,\col))$-core stable, for any integers $\col,q$ with $\col\geq q+1$, and $f(q,\col)=$
		\begin{equation*}\max\left(1,k\left(\left\lfloor\frac{\col-1}{q-1}\right\rfloor\frac{\alpha_\col}{\alpha_q}+\frac{\indmod\alpha_\col}{\alpha\left((\col-1)\bmod(q-1)+1\right)}\right)\right).
		\end{equation*}
	\end{theorem}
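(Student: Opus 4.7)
The plan is to prove the bound by a partition-and-cover argument on the coalition $C$. Fix any $C \subseteq A$ with $|C| = \col \geq q + 1$ and write $\col - 1 = a(q-1) + r$, where $a = \lfloor (\col-1)/(q-1) \rfloor$ and $0 \leq r = (\col-1)\bmod(q-1) < q-1$. The goal is to exhibit some $i^* \in C$ with $u_{i^*}(C) \leq f(q,\col)\,u_{i^*}(\CC)$. Degenerate situations where the bracketed expression inside the $\max(1,\cdot)$ is below $1$ (in particular those involving $u_i(\CC)\le 0$) are dispatched separately by invoking $k$-improvement stability on singletons, which forces $u_i(\CC)\ge 0$ for all $i\in A$, combined with the fact that a blocking coalition must \emph{strictly} improve each member's utility; together these supply the $1$ in the $\max$. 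What remains is to prove the bracketed inequality whenever it exceeds~$1$.

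The core of the argument is to select a pivot $i^* \in C$ --- a natural candidate is $i^* = \arg\max_{i \in C} u_i(\CC)$, which already suffices for the base case $q = 2$ --- and to partition $C \setminus \{i^*\}$ into $a$ blocks $T_1, \ldots, T_a$ of size $q-1$ and, if $r > 0$, one further block $T_0$ of size $r$. Adjoining $i^*$ yields $a$ coalitions $S_j = T_j \cup \{i^*\}$ of size $q$ and, if $r > 0$, one coalition $S_0$ of size $r+1$, each contained in $C$ and of size at most $q$. In the \emph{favorable} scenario in which $i^*$ is the bounded agent in every $S_j$, applying $k$-improvement core stability to each $S_j$ gives $\alpha_{|S_j|} \sum_{j' \in T_j} u(i^*, j') \leq k\,u_{i^*}(\CC)$, and summing across $j$ (using that the $T_j$'s partition $C\setminus\{i^*\}$) yields
\[
u_{i^*}(C) = \alpha_\col \sum_{j' \in C \setminus \{i^*\}} u(i^*, j')\;\leq\; k\,u_{i^*}(\CC) \left( \frac{a\,\alpha_\col}{\alpha_q} + \frac{\ind(r>0)\,\alpha_\col}{\alpha_{r+1}} \right),
\]
which matches the bracketed expression defining $f(q,\col)$ exactly.

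The main obstacle is the \emph{unfavorable} scenario in which, for some $j$, the bounded agent $l_j \in S_j$ is not $i^*$: the stability inequality then constrains edges incident to $l_j$, not to $i^*$, so the star-sum around $i^*$ is not controlled directly. The plan to overcome this is to exploit the extremal choice $i^* = \arg\max u_i(\CC)$, which yields $k\,u_{l_j}(\CC) \le k\,u_{i^*}(\CC)$, and then either (a) reshuffle the partition by swapping $l_j$ out of $T_j$ until $i^*$ is the bounded agent of every part, showing that the iterative swap procedure must terminate, or (b) average the stability constraints over a carefully chosen family of partitions so that every edge $\{i^*, j'\}$ is charged the correct number of times overall. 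Option (b) looks cleaner: one counts how often each edge incident to $i^*$ is ``charged'' and uses $u_{l_j}(\CC) \le u_{i^*}(\CC)$ to absorb the contributions of edges charged to $l_j$ back into $i^*$'s budget. The same scheme handles the smaller coalition $S_0$ of size $r+1$, which explains the separate term $\ind(r>0)\,\alpha_\col/\alpha_{r+1}$ in the bound. The non-monotonicity flagged after \Cref{eq:frac_bound} is a by-product of the fact that the two terms trade off as $\col$ varies relative to $q-1$, since incrementing $\col$ changes $r$ (and sometimes $a$) discontinuously.
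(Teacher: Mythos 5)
You have correctly identified the shape of the argument: split $C\setminus\{i^*\}$ into $\lfloor\frac{\col-1}{q-1}\rfloor$ blocks of size $q-1$ plus a remainder block, apply stability to each block adjoined with a pivot, and sum; your ``favorable scenario'' computation reproduces the paper's final step, and your use of singleton stability to get $u_i(\CC)\ge 0$ is fine. The problem is that the entire difficulty of the theorem lives in your ``unfavorable scenario,'' and neither of your two candidate resolutions is carried out or obviously completable. Option (b) has a structural defect: when the bounded agent of $S_j=T_j\cup\{i^*\}$ is some $a_\ell\neq i^*$, the stability inequality reads $\alpha_q\sum_{a_{j'}\in S_j}u(\ell,j')\le k\,u_\ell(\CC)$, and only the single term $u(\ell,i^*)$ of that sum is an edge incident to $i^*$; the cross terms $u(\ell,j')$ for $a_{j'}\in T_j\setminus\{a_\ell\}$ can be arbitrarily negative, so the inequality gives essentially no control on $u(i^*,\ell)$, let alone on $\sum_{a_{j'}\in T_j}u(i^*,j')$. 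The absorption $u_\ell(\CC)\le u_{i^*}(\CC)$ supplied by the argmax choice of pivot does not repair this, because the uncontrolled quantity is a sum of edge weights, not an initial utility.

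Option (a) is essentially the route the paper takes, but the missing ingredient is exactly what makes the exchange argument work: the paper chooses the blocks as a \emph{maximum-weight hypergraph matching} of size-$(q-1)$ sets with respect to the modified welfare $w(C')=\sum_{a_i\in C'}\bigl(u_i(C')-2\tfrac{k\alpha_{q-1}}{\alpha_q}u_i(\CC)\bigr)$, and takes the pivot to be an \emph{unmatched} agent $a_0$ rather than the argmax. The correction term $-2\tfrac{k\alpha_{q-1}}{\alpha_q}u_i(\CC)$ is calibrated so that if some block $C_i$ had bounded agent $a_\ell\neq a_0$ while $\sum_{a_j\in C_i}u(0,j)>\tfrac{k\,u_0(\CC)}{\alpha_q}$, then the exchanged block $C_i\cup\{a_0\}\setminus\{a_\ell\}$ would have strictly larger modified welfare, contradicting maximality; hence the favorable case holds for every block and every unmatched agent. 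Without such a potential function, ``swap until the pivot is bounded everywhere and show termination'' is not a proof: a naive swap can cycle, and there is no a priori guarantee that any partition exists in which a single fixed pivot is the bounded agent of every block. A smaller omission: for the remainder, the paper does not adjoin the pivot to the leftover block but uses that the set $C_0$ of unmatched agents (of size $(\col-1)\bmod(q-1)+1\le q$) is itself non-blocking, which is what produces \emph{one} agent $a_0\in C_0$ simultaneously satisfying the within-$C_0$ bound and all the per-block bounds; with your fixed argmax pivot these requirements need not be met by the same agent. As it stands, the proposal is a correct reduction of the theorem to its hardest step, not a proof of it.
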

	
	\begin{proof}
		
		Consider a coalition structure~$\mathcal{C}$ which is $k$-improvement core stable for coalitions of size $1,\ldots,q$ and a blocking coalition~$C$ of size~$\col\geq q+1$. For a given coalition $C'\subseteq C$ we let $$w(C') = \sum_{a_i \in C'} \left(u_i(C') - 2 \frac{k\alpha_{q-1}}{\alpha_q} u_i(\CC)\right)$$ denote the \emph{modified social welfare} of coalition $C'$. Let $\mathcal{C}_{q-1}$ be the set of coalitions of size $q-1$ and consider the weighted hypergraph $(C, \mathcal{C}_{q-1},w)$. Let $M = \{C_1, \dots, C_{\left\lfloor\frac{\col-1}{q-1}\right\rfloor}\}$ be any maximum weight hypergraph matching with regard to $w$ in this hypergraph, i.e., selection of non-overlapping sets from $\CC_{q-1}$ of size $\left\lfloor\frac{\col-1}{q-1}\right\rfloor$ with maximum weight. We note that a maximum weight hypergraph matching not of size $\left\lfloor\frac{\col-1}{q-1}\right\rfloor$ might have a larger weight, due to $w$ being potentially negative. Let $C_0$ be the set of agents unmatched by $M$. The goal of our proof is now to show that there must be an unmatched agent who can only improve by a factor of at most $$\max\left(1,k\left(\left\lfloor\frac{\col-1}{q-1}\right\rfloor\frac{\alpha_\col}{\alpha_q}+\frac{\indmod\alpha_\col}{\alpha\left((\col-1)\bmod(q-1)+1\right)}\right)\right).$$ The set $C_0$ contains exactly $(\col-1)\bmod(q-1)+1$ agents that are unmatched by $M$. Let $a_0\in C_0$. For any $i \in [\left\lfloor\frac{\col-1}{q-1}\right\rfloor]$ we know that not all of the agents in coalition $\{a_0\} \cup C_i$ cannot improve by a factor of more than $k$ (as this coalition is of size at most $q$). Thus, either one of the two following cases has to hold:
		
		\begin{enumerate}
			\item[(i)] $\sum_{a_j \in C_i} u(0,j) \le  \frac{ku_0(\CC)}{\alpha_q}$,
			\item[(ii)] $\sum_{a_j \in C_i} u(0,j) > \frac{ku_0(\CC)}{\alpha_q}$ and there is an $a_\ell\in C_i$ with $\sum_{a_j \in C_i \cup \{a_0\}} u(\ell,j) \le \frac{ku_\ell(\CC)}{\alpha_q}$.
		\end{enumerate}
		
		By definition, the modified welfare of coalition $C_i \cup \{a_0\} \setminus \{a_\ell\}$ equals
		\begin{align*}
			w(C_i \cup \{a_0\} \setminus \{a_\ell\})&=\sum_{a_j \in C_i \setminus \{a_\ell\}} \left(u_j(C_i \cup \{a_0\} \setminus \{a_\ell\}) - 2\frac{k\alpha_{q-1}}{\alpha_q}u_j(\CC)\right)\\
			&\qquad+ \sum_{a_j \in C_i \setminus \{a_\ell\}} \alpha_{q-1}u(0,j) - 2\frac{k\alpha_{q-1}}{\alpha_q}u_0(\CC).
		\end{align*}
		By specifying the term $u_j(C_i \cup \{a_0\} \setminus \{a_\ell\})$ and some rearranging, this is equivalent to
		\begin{align*}
			&\sum_{a_j \in C_i \setminus \{a_\ell\}} \biggl(u_j(C_i) + \alpha_{q-1}u(j,0) - \alpha_{q-1}u(j,\ell) - 2\frac{k\alpha_{q-1}}{\alpha_q}u_j(\CC)\biggr)\\
			&\qquad + \sum_{a_j \in C_i \setminus \{a_\ell\}} \alpha_{q-1}u(0,j)-  2\frac{k\alpha_{q-1}}{\alpha_q}u_0(\CC)\\
			&= \sum_{a_j \in C_i \setminus \{a_\ell\}} \left(u_j(C_i)  - 2\frac{k\alpha_{q-1}}{\alpha_q}u_j(\CC)\right)  - \sum_{a_j \in C_i \setminus \{a_\ell\}} \alpha_{q-1}u(j,\ell)\\
			&\qquad + \sum_{a_j \in C_i \setminus \{a_\ell\}} 2\alpha_{q-1}u(0,j)  - 2\frac{k\alpha_{q-1}}{\alpha_q}u_0(\CC). 
		\end{align*}
		
		We will show that case (ii) cannot hold by contradiction. To do so, we will show that $$w(C_i \cup \{a_0\} \setminus \{a_\ell\}) > w(C_i),$$ which implies that the hypergraph matching cannot have been of maximum weight in case (ii). Applying the first condition of case (ii) to the last term yields
		\begin{align*}
			w(C_i \cup \{a_0\} \setminus \{a_\ell\})&> \sum_{a_j \in C_i \setminus \{a_\ell\}} \left(u_j(C_i)  - 2\frac{k\alpha_{q-1}}{\alpha_q}u_j(\CC)\right)- \sum_{a_j \in C_i} \alpha_{q-1}u(\ell,j)\\&\qquad + \sum_{a_j \in C_i \setminus \{a_\ell\}} 2\alpha_{q-1}u(0,j) - \sum_{a_j \in C_i} 2\alpha_{q-1}u(0,j),
		\end{align*}
		where we used the symmetry of the utilities and the fact that $u(\ell, \ell) = 0$ to rewrite the second summation. By removing the double-counting in the last two summations for all agents except $a_\ell$, this reduces to
		\begin{align*}
			w(C_i \cup \{a_0\} \setminus \{a_\ell\})&>\sum_{a_j \in C_i \setminus \{a_\ell\}} \left(u_j(C_i)  - 2\frac{k\alpha_{q-1}}{\alpha_q}u_j(\CC)\right) - \sum_{a_j \in C_i} \alpha_{q-1}u(\ell,j) \\&\qquad - 2 \alpha_{q-1}u(0,\ell).
		\end{align*}
		Adding and deducting the term $\sum_{a_j \in C_i} 2\alpha_{q-1}u(\ell,j)$ allows us to rewrite the right hand side of this expression as
		\begin{align*}
			w(C_i \cup \{a_0\} \setminus \{a_\ell\})&> \sum_{a_j \in C_i \setminus \{a_\ell\}} \left(u_j(C_i)  - 2\frac{k\alpha_{q-1}}{\alpha_q}u_j(\CC)\right) - \sum_{a_j \in C_i} \alpha_{q-1}u(\ell,j) \\&\qquad  + \sum_{a_j \in C_i} 2\alpha_{q-1}u(\ell,j) -\sum_{a_j \in C_i} 2\alpha_{q-1}u(\ell,j) - 2 \alpha_{q-1}u(0,\ell)\\
			&= \sum_{a_j \in C_i \setminus \{a_\ell\}} \left(u_j(C_i)  - 2\frac{k\alpha_{q-1}}{\alpha_q}u_j(\CC)\right)+ \sum_{a_j \in C_i} \alpha_{q-1}u(\ell,j)\\&\qquad - \sum_{a_j \in C_i \cup \{ a_0\}} 2\alpha_{q-1}u(\ell,j).
		\end{align*}
		Lastly,  applying the second condition in case (ii) to the last summation yields the desired contradiction, implying that the hypergraph matching was not of maximum weight in case (ii):
		\begin{align*}
			w(C_i \cup \{a_0\} \setminus \{a_\ell\})&>\sum_{a_j \in C_i \setminus \{a_\ell\}} \left(u_j(C_i)  - 2\frac{k\alpha_{q-1}}{k\alpha_q}u_j(\CC)\right)+ \sum_{a_j \in C_i} \alpha_{q-1}u(\ell,j) \\ &\qquad -  2\frac{k\alpha_{q-1}}{\alpha_q}u_\ell(\CC)\\
			& = w(C_i).
		\end{align*}
		Because we obtain a contradiction in case (ii), case (i) should hold. As a result, $\sum_{a_j \in C_i} u(0,j) \le  \frac{ku_0(\CC)}{\alpha_q}$ holds for every $C_i$ and every $a_0 \in C_0$. The result follows in each of the following two cases, depending on the size of $C_0$.
		
		In the case that $|C_0|=1$, we can apply the inequality in case (i) to bound the utility of $a_0$ in blocking coalition $C$ by 
		\begin{align*}
			u_0(C)=\sum_{a_j \in C} \alpha_{\col}u(0,j) &= \sum_{C_i \in M} \sum_{a_j \in C_i} \alpha_{\col}u(0,j)\\ &\le \sum_{C_i \in M} \frac{k\alpha_\col}{\alpha_q} u_0(\CC)= k\left(\frac{\col-1}{q-1}\frac{\alpha_\col}{\alpha_q}u_0(\CC)\right).
		\end{align*}
		Hence, agent $a_0$ does not improve by more than a factor of $k\left(\frac{\col-1}{q-1}\frac{\alpha_\col}{\alpha_q}\right)$, which implies the result, because $(q-1)$ divides $(\col-1)$ when $|C_0|=1$.
		
		In the case that $|C_0|>1$, there has to be at least one agent $a_0$ in $C_0$ with $\sum_{a_i \in C_0} u(0,i) \le \frac{ku_0(\CC)}{\alpha(\lvert C_0 \rvert)} = \frac{ku_0(\CC)}{\alpha((\col-1) \bmod (q-1)+1)}$, since the set $C_0$ of unmatched agents is non-blocking. Applying this observation and the inequality in case (i), we obtain that
		\begin{align*}
			&\sum_{a_j \in C} \alpha_{\col}u(0,j) = \sum_{C_i \in M} \sum_{a_j \in C_i} \alpha_{\col}u(0,j) + \sum_{a_i \in C_0} \alpha_{\col}u(0,i) \\ 
			&\le \sum_{C_i \in M} \frac{k\alpha_\col}{\alpha_q}u_0(\CC) + \frac{k\alpha_\col}{\alpha\left((\col-1)\bmod(q-1)+1\right)}u_0(\CC) \\
			&= k\left(\left\lfloor\frac{\col-1}{q-1}\right\rfloor\frac{\alpha_\col}{\alpha_q} + \frac{\alpha_\col}{\alpha((\col-1)\bmod(q-1)+1)}\right)u_0(\CC),
		\end{align*}
		which concludes the proof.
	\end{proof}
	Note that setting $k=1$ in the statement of \Cref{th:alphaHG} implies a result for $q$-size stable coalition structures in $\alpha$HGs.
	\begin{corollary}
		\label{cor:alphaHG}
		Any $q$-size core stable coalition structure in an $\alpha$HG is also $(\col, f(q,\col))$-core stable, for any integers $\col,q$ with $\col\geq q+1$, and $f(q,\col)=$
		\begin{equation*}\max\left(1,\left\lfloor\frac{\col-1}{q-1}\right\rfloor\frac{\alpha_\col}{\alpha_q}+\frac{\indmod\alpha_\col}{\alpha\left((\col-1)\bmod(q-1)+1\right)}\right).
		\end{equation*}
	\end{corollary}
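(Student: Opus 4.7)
The plan is to obtain this corollary as a direct specialization of \Cref{th:alphaHG} to the case $k=1$, which is already hinted at in the remark immediately preceding the statement. The only work is to verify that the hypothesis of the theorem specializes correctly, and that the factor $k$ disappears from the formula in the expected way.

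First, I would argue that the hypothesis of $q$-size core stability coincides with being $1$-improvement core stable for coalitions of size $1,\ldots,q$. By definition, a coalition structure $\mathcal{C}$ is $q$-size core stable if and only if, for every coalition $C$ with $1 \leq |C| \leq q$, there exists $i \in C$ with $u_i(C) \leq u_i(\mathcal{C})$. Writing the right-hand side as $1 \cdot u_i(\mathcal{C})$, this is exactly $(|C|,1)$-core stability for each $1 \leq |C| \leq q$. By the equivalence noted in the preliminaries, this is precisely what it means for $\mathcal{C}$ to be $1$-improvement core stable for coalitions of size $1,\ldots,q$. Hence, any $q$-size core stable coalition structure satisfies the hypothesis of \Cref{th:alphaHG} with $k=1$.

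Second, I would substitute $k=1$ into the conclusion of \Cref{th:alphaHG}. The formula
\[
f(q,\col) = \max\left(1,\, k\left(\left\lfloor\frac{\col-1}{q-1}\right\rfloor\frac{\alpha_\col}{\alpha_q} + \frac{\indmod\,\alpha_\col}{\alpha((\col-1)\bmod(q-1)+1)}\right)\right)
\]
reduces, upon setting $k=1$, to exactly the expression
\[
\max\left(1,\, \left\lfloor\frac{\col-1}{q-1}\right\rfloor\frac{\alpha_\col}{\alpha_q} + \frac{\indmod\,\alpha_\col}{\alpha((\col-1)\bmod(q-1)+1)}\right)
\]
stated in the corollary. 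Since the derivation is a pure substitution into a theorem whose proof has already been carried out, there is no genuine obstacle; the whole technical content lives in \Cref{th:alphaHG}, and the corollary only records the specialization that is most directly relevant to $q$-size core stability.
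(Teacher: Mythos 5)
Your proposal is correct and matches the paper exactly: the paper proves this corollary simply by setting $k=1$ in \Cref{th:alphaHG}, using the equivalence (noted in the preliminaries) between $q$-size core stability and $(|C|,1)$-core stability for all $|C|\leq q$. Your additional verification that the hypothesis specializes correctly is sound and only makes the implicit step explicit.
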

	
	
	\section{Applications for Existing Hedonic Games Models}
	\label{sec:applications}
	
	In this section, we will show the general applicability of \Cref{th:alphaHG} for various previously studied models of hedonic games that belong to the class of $\alpha$-hedonic games.
	\subsection{Fractional Hedonic Games}
	In the case of fractional hedonic games, which are $\alpha$HGs in which $\alpha_\col = \frac{1}{\col}$, we can simplify the bound in \Cref{th:alphaHG} as follows.
	\begin{corollary}
		Any coalition structure in a FHG that is $k$-improvement core stable for coalitions of size $1,\ldots,q$ is also $\left(\col, k\left(1 + \frac{\left\lfloor \frac{1}{q-1} (\col-2)\right\rfloor}{\col}\right)\right)$-core stable, for any integers $\col,q$ with $\col\geq q+1$.
		\label{cor:FHG}
	\end{corollary}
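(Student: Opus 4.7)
The approach is simply to specialize the bound of Theorem~\ref{th:alphaHG} to $\alpha_\col = 1/\col$ and then verify algebraically that the resulting expression simplifies to the one claimed in the corollary. Plugging in the FHG weights gives $\alpha_\col/\alpha_q = q/\col$ and $\alpha_\col/\alpha\big((\col-1)\bmod(q-1)+1\big) = \big((\col-1)\bmod(q-1)+1\big)/\col$, so Theorem~\ref{th:alphaHG} directly yields a $(\col, f(q,\col))$-core stability guarantee with
\begin{equation*}
f(q,\col) = \max\!\left(1,\; \frac{k}{\col}\left(q\left\lfloor\tfrac{\col-1}{q-1}\right\rfloor + \indmod\cdot\big((\col-1)\bmod(q-1)+1\big)\right)\right).
\end{equation*}
It thus remains to show that this quantity equals $k\bigl(1 + \lfloor(\col-2)/(q-1)\rfloor/\col\bigr)$.

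The natural plan is to write $\col - 1 = (q-1)m + r$ with $m = \lfloor(\col-1)/(q-1)\rfloor$ and $r = (\col-1)\bmod(q-1) \in \{0, 1, \dots, q-2\}$, and split into two cases according to whether $(q-1)\mid(\col-1)$. In the case $r=0$, the indicator kills the second summand inside the bracket, which becomes $qm$; here $\col-2 = (q-1)(m-1)+(q-2)$, so $\lfloor(\col-2)/(q-1)\rfloor = m-1$, and using $\col = (q-1)m+1$ one checks $\col + m - 1 = qm$. In the case $r \ge 1$, the bracket is $qm + r + 1$; here $\col-2 = (q-1)m + (r-1)$ with $r-1 \in \{0,\dots,q-3\}$, so $\lfloor(\col-2)/(q-1)\rfloor = m$, and $\col = (q-1)m + r + 1$ gives $\col + m = qm + r + 1$. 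In both cases the bracket equals $\col + \lfloor(\col-2)/(q-1)\rfloor$, so dividing by $\col$ and multiplying by $k$ gives exactly the claimed expression.

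Finally, the $\max$ with $1$ is redundant: since $k$-improvement core stability is only meaningful for $k \ge 1$ (otherwise it would be stronger than core stability, which by \Cref{prop:2_size_exists}-style arguments need not exist), the factor $k(1 + \lfloor(\col-2)/(q-1)\rfloor/\col) \ge k \ge 1$ automatically. The main obstacle is purely the bookkeeping in aligning the indicator $\indmod$ with the correct value of $\lfloor(\col-2)/(q-1)\rfloor$ when passing from ``$\col-1$'' quantities to ``$\col-2$'' quantities; once the two-case split is set up, each case reduces to a one-line arithmetic check.
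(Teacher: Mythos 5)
Your proposal is correct and follows essentially the same route as the paper: specialize Theorem~\ref{th:alphaHG} to $\alpha_\col = 1/\col$ and split on whether $(q-1)\mid(\col-1)$, with your division-algorithm bookkeeping ($\col-1=(q-1)\lfloor\tfrac{\col-1}{q-1}\rfloor+r$) being an equivalent, if slightly cleaner, packaging of the paper's direct floor-identity manipulation; your explicit justification that the $\max$ with $1$ is vacuous for $k\ge 1$ is also in line with the paper's (more terse) remark. The only caveat is notational: you introduce a fresh variable for $\lfloor\tfrac{\col-1}{q-1}\rfloor$ that collides with the paper's symbol for the coalition size $\col$, so rename it before splicing into the text.
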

	
	\begin{proof} In the case that $(m-1)$ is divisible by $(q-1)$, it holds that
		
		\begin{align*}
			&\left\lfloor\frac{\col-1}{q-1}\right\rfloor\frac{\alpha_\col}{\alpha_q}+\frac{\indmod\alpha_\col}{\alpha\left((\col-1)\bmod(q-1)+1\right)} = \frac{\col-1}{q-1}\frac{q}{\col}\\
			&= \frac{\col(q-1)+\col-q}{(q-1)\col} = 1 + \frac{\frac{1}{q-1} (\col-q)}{\col}=1 + \frac{\left\lfloor \frac{1}{q-1} (\col-2)\right\rfloor}{\col}.
		\end{align*}
		
		
		\noindent In the case that $(\col-1)\bmod(q-1)\neq0$, we obtain that 
		\begin{align*}
			&\left\lfloor\frac{\col-1}{q-1}\right\rfloor\frac{\alpha_\col}{\alpha_q}+\frac{\indmod\alpha_\col}{\alpha\left((\col-1)\bmod(q-1)+1\right)} = \left\lfloor\frac{\col-1}{q-1}\right\rfloor\frac{q}{\col} + \frac{(\col - 1) \bmod (q-1) + 1}{\col}.
		\end{align*}
		By definition of the floor operator, we can rewrite this expression as
		\begin{align*}
			&\frac{(m-1)-(m-1) \bmod (q-1)}{q-1} \frac{q}{\col} +\frac{(\col - 1) \bmod (q-1) + 1}{\col}\\
			&= \frac{(m-1)-(m-1) \bmod (q-1)}{q-1} \frac{q-1}{\col} +\frac{(\col-1)-(\col-1) \bmod (q-1)}{q-1}\frac{1}{\col}\\
			&\qquad  +\frac{(\col - 1) \bmod (q-1) + 1}{\col} \\
			&= \frac{1}{\col} \left(\col + \frac{(\col-1)-(\col-1) \bmod (q-1)}{q-1}\right) = 1 + \frac{\left\lfloor \frac{1}{q-1} (\col-2)\right\rfloor}{\col},
		\end{align*}
		where the last step holds since $(\col - 1) \bmod (q-1) > 0$. Note that this expression is not smaller than one, and we can therefore discard the maximum operator in $f(q,\col)$. Multiplying by $k$ gives us the desired bound.
	\end{proof}
	
	As a corollary we obtain an answer to the conjecture of \citet{FMM21a}, by confirming that every $q$-size core stable outcome is also $\frac{q}{q-1}$-improvement stable.
	\begin{corollary}
		\label{cor:general_FHG}
		For FHGs, every $q$-size core stable coalition structure is $\frac{q}{q-1}$-improvement stable.
	\end{corollary}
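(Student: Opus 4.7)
The plan is to derive \Cref{cor:general_FHG} directly from \Cref{cor:FHG} by showing that the per-size bound $1 + \lfloor (\col-2)/(q-1)\rfloor/\col$ never exceeds the uniform improvement factor $q/(q-1)$ promised by the conjecture, as $\col$ ranges over all integers $\col \geq q+1$. Since a coalition structure is $k$-improvement stable exactly when it is $(\col,k)$-core stable for every coalition size $\col$, establishing this uniform upper bound immediately yields the conclusion. Coalitions of size $\col \leq q$ are already handled by the $q$-size core stability assumption (giving the stronger bound $1 \leq q/(q-1)$), so only $\col \geq q+1$ needs attention, and that case is covered by \Cref{cor:FHG} specialized to $k=1$.

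First I would invoke \Cref{cor:FHG} with $k=1$ to conclude that any $q$-size core stable coalition structure in a FHG is $(\col, 1 + \lfloor (\col-2)/(q-1)\rfloor/\col)$-core stable for every $\col \geq q+1$. Next I would reduce the desired inequality
\begin{equation*}
1 + \frac{\lfloor (\col-2)/(q-1)\rfloor}{\col} \;\leq\; \frac{q}{q-1}
\end{equation*}
to $(q-1)\lfloor (\col-2)/(q-1)\rfloor \leq \col$, which follows from the trivial floor bound $(q-1)\lfloor (\col-2)/(q-1)\rfloor \leq \col - 2 < \col$. This is the entire analytic content of the corollary.

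There is no real obstacle here: the argument is a one-line calculation riding on top of \Cref{cor:FHG}. The only thing worth flagging explicitly in the write-up is that the bound from \Cref{cor:FHG} must be checked to hold uniformly over all $\col \geq q+1$ (rather than just some fixed~$\col$) so that the resulting guarantee matches the definition of $k$-improvement stability, and that the small coalition sizes $\col \leq q$ are subsumed by the hypothesis of $q$-size core stability itself.
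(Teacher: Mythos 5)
Your proposal is correct and follows essentially the same route as the paper: both invoke the per-size bound from \Cref{cor:FHG} with $k=1$ and then verify that $1 + \lfloor(\col-2)/(q-1)\rfloor/\col \le q/(q-1)$ for all $\col \ge q+1$ via the elementary floor estimate $\lfloor(\col-2)/(q-1)\rfloor \le (\col-2)/(q-1)$. The paper's one-line chain of inequalities is algebraically identical to your reduction, and your explicit remark that sizes $\col \le q$ are already covered by $q$-size core stability is a correct (if implicit in the paper) observation.
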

	\begin{proof}
		This result follows from the observation that\begin{align*}
			&1+\frac{\left\lfloor \frac{1}{q-1} (\col-2)\right\rfloor}{\col} \le 1+\frac{ \frac{1}{q-1} (\col-2)}{\col} \le  1 + \frac{1}{q-1} = \frac{q}{q-1}\end{align*}
		holds for all $m$. Thus, there is no coalition of size $m \geq q+1$ in which every agent improves by a factor of more than $\frac{q}{q-1}$.
	\end{proof}
	
	If we restrict ourselves to the case of \emph{simple symmetric fractional hedonic games (SS-FHG)}, i.e., symmetric FHGs with binary utilities, we can show that this bound is not tight. This proof further strengthens Theorem 2 of \citet{FMM21a}.
	\begin{theorem}
		For every simple symmetric fractional hedonic game, any 3-size core stable coalition structure is $$\left(\col, \frac{3}{2}\frac{(\col-1)}{\col}\right)\text{-core stable,}$$ for any integer~$\col\geq 4$.
	\end{theorem}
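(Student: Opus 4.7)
Suppose for contradiction that $\CC$ is a 3-size core stable coalition structure and $C \subseteq A$ with $|C| = \col \geq 4$ is a $(\col, \frac{3(\col-1)}{2\col})$-blocking coalition. In the binary-utility notation of a simple FHG, write $d_i := |N(i) \cap C|$ for the number of $i$'s friends in $C$ and $s_i := u_i(\CC)$. The blocking hypothesis translates to $d_i > \frac{3(\col-1)}{2} s_i$ for every $i \in C$, and since $d_i \leq \col - 1$ this immediately forces $s_i < 2/3$ for all $i \in C$. The plan is to exploit this bound, together with 2- and 3-stability, to force the friendship subgraph $C[C]$ into a bipartite structure and then exhibit a blocking triple.

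The structural backbone consists of three observations. First, $C[C]$ must be triangle-free: a triangle $\{a,b,c\}$ would give each agent utility $2/3$, strictly exceeding every $s_\cdot < 2/3$ and blocking $\CC$. Second, since 3-size stability implies 2-size stability, no friendship edge $\{i,j\}$ with $s_i, s_j < 1/2$ can survive; partitioning $C$ into $H = \{i \in C : s_i \geq 1/2\}$ and $L = \{i \in C : s_i < 1/2\}$, the set $L$ is therefore independent in $C[C]$. Third, $H$ is also independent: for $i,j \in H$ with $ij$ an edge, the blocking condition forces $d_i + d_j > \frac{3(\col-1)}{2}$, while triangle-freeness places $N(i)\setminus\{j\}$ and $N(j)\setminus\{i\}$ disjointly inside $C \setminus \{i,j\}$, yielding $d_i + d_j \leq \col$; these are incompatible for $\col \geq 4$. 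Hence $C[C]$ is bipartite with parts $H$ and $L$.

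To close the argument, note that every $a \in H$ has all $d_a \geq \lfloor 3(\col-1)/4 \rfloor + 1 \geq 3$ neighbors in $L$, which forces $|H| \leq \col - \lfloor 3(\col-1)/4 \rfloor - 1 \leq \lfloor (\col-1)/2 \rfloor$. Refine $L = L_1 \cup L_2$ with $L_2 = \{i \in L : s_i \geq 1/3\}$: any $i \in L_2$ would need $d_i > (\col-1)/2$, but all of $i$'s neighbors lie in $H$, contradicting the upper bound on $|H|$. Hence $L_2 = \emptyset$ and every $L$-vertex has $s < 1/3$. Finally, $H$ is nonempty (otherwise $d_i = 0$ for all $i \in C$, impossible by blocking), so pick any $a \in H$ and any two of its $L$-neighbors $i, j$; the triple $\{a, i, j\}$ induces only the edges $ai, aj$ (as $L$ is independent), giving $u_a(\{a,i,j\}) = 2/3 > s_a$ and $u_i(\{a,i,j\}) = u_j(\{a,i,j\}) = 1/3 > s_i, s_j$. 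This triple blocks $\CC$, the desired contradiction.

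The main obstacle is the independence of $H$: this is the step where binary utilities meet the degree lower bounds implied by blocking, and one must verify the incompatibility of $d_i + d_j > \frac{3(\col-1)}{2}$ with triangle-freeness cleanly for every $\col \geq 4$. Everything afterwards is a two-step refinement by the $1/3$-cutoff on $L$, which pins down the explicit blocking triple of the form (one $H$-vertex, two of its $L$-neighbors).
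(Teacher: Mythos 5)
Your proof is correct, and it takes a genuinely different route from the paper's. The paper's argument is local and only treats even $\col$ directly (odd $\col$ is deferred to the general bound of \Cref{th:alphaHG}, which already equals $\frac{3}{2}\frac{\col-1}{\col}$ in that case): it picks one agent $a_i$ with $u_i(\CC)\ge\frac12$, deduces from the blocking hypothesis that $a_i$ has more than $\frac{\col}{2}$ neighbours in $C$, and then chases through that neighbourhood---via $3$-size stability and a second high-utility agent forcing a triangle---until it exhibits some agent with $u(\CC)\ge\frac23$, at which point the chain $\frac23 k\le k\,u(\CC)<u(C)\le\frac{\col-1}{\col}$ caps the improvement factor. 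You instead argue globally and uniformly in $\col\ge4$: triangle-freeness and $2$-size stability make the friendship graph on $C$ bipartite between $H=\{i: s_i\ge\frac12\}$ and $L=\{i: s_i<\frac12\}$ (your degree-sum argument for the independence of $H$ is the one genuinely new ingredient, and it checks out for all $\col\ge4$ since $d_i+d_j\le\col\le\frac{3(\col-1)}{2}$), a counting bound then gives $|H|\le\lfloor\frac{\col-1}{2}\rfloor$, which pushes every $L$-agent below $\frac13$, and the contradiction is an explicit blocking star $\{a,i,j\}$ of size three rather than a numerical bound on $k$. The paper's route is shorter and reuses the general machinery; yours is self-contained for all $\col$, never needs to produce an agent with $u(\CC)\ge\frac23$, sidesteps the paper's somewhat delicate case distinction over the triple $\{a_i,a_j,a_k\}$ (where the stated threshold $u_j(\CC)\ge\frac12$ is stronger than what non-blockingness of that triple literally yields when $a_j a_k$ is not an edge), and yields more structural information about any would-be counterexample.
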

	\begin{proof}
		We only prove the result for even values of $\col$, as the result for odd $\col$ follows from the general result of \Cref{th:alphaHG}. Consider a 3-size core stable coalition structure~$\CC$ and a blocking coalition~$C$ of size $\col\geq4$. Further, we assume that every agent in $C$ improves by more than a factor of $k$. Because all agents experience a strict improvement by forming~$C$, each agent should be adjacent to at least one edge in the related graph~$G(C)$. Since 3-size stability implies 2-size stability, there should be at least one agent~$a_i\in C$ for whom $u_i(\CC)\geq \frac{1}{2}$. Since $a_i$ improves by more than a factor of $k$ it holds that
		\begin{equation}
			\label{eq:min_edges}
			\sum_{a_\ell\in C\setminus\{a_i\}}u(i,\ell) > k\cdot \col \cdot u_i(\CC) \geq \frac{\col}{2},
		\end{equation}
		which implies that $a_i$ should be adjacent to at least $\frac{\col}{2}+1$ edges in $G(C)$. Denote the set of agents that are connected to $a_i$ through these edges by $C'\subset C$. For any triplet of agents $\{a_i,a_j,a_k\}$ with $\{a_j,a_k\}\subset C'$, the 3-size core stability of~$\CC$ implies that either $u_i(\CC) \geq \frac{2}{3}$, $u_j(\CC) \geq \frac{1}{2}$, or $u_k(\CC) \geq\frac{1}{2}$ must hold. If $u_i(\CC) \geq \frac{2}{3}$ holds, then because $a_i$ improves in $C$ by a factor of more than $k$ we get that 
		\begin{equation}
			\label{eq:UB_k}
			\frac{2}{3}k \le k \cdot u_i(\CC) < \frac{1}{\col}\sum_{a_\ell\in C\setminus\{a_i\}}u(i,\ell) \le \frac{\col-1}{\col}.
		\end{equation}
		Thus, we get that $k < \frac{3}{2} \frac{\col-1}{\col}$ in this case.
		Alternatively, assume without loss of generality that $u_j(\CC)\geq\frac{1}{2}$. Following the logic from \Cref{eq:min_edges}, $a_j$ should be adjacent to at least $\frac{\col}{2}+1$ edges in $G(C)$. This implies that there exists an agent $a_\ell\in C'$ such that the agents $\{a_i,a_j,a_\ell\}$ form a triangle, and hence at least one of these three agents should experience a utility of at least $\frac{2}{3}$ in~$\CC$. Thus, Equation~(\ref{eq:UB_k}) holds for this agent and the result follows.
	\end{proof}
	
	\subsection{Modified Fractional Hedonic Games}
	In the case of modified fractional hedonic games, where $\alpha_\col= \frac{1}{\col-1}$, our result implies the non-emptiness of the core that was first shown by Monaco et al.\ \citep[][Theorem 14]{MMV20a}. In fact, we can use our result to derive a simple condition to show the non-emptiness of the core for a wide variety of $\alpha$-hedonic games.
	
	\begin{corollary}
		Any $\alpha$HG with $\frac{(m - 1) \alpha_m}{\alpha_2} \le 1$ always admits a core stable outcome.
		\label{cor:core_char}
	\end{corollary}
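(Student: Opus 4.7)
The plan is to combine \Cref{prop:2_size_exists}, which guarantees the existence of a 2-size core stable coalition structure in every $\alpha$HG, with \Cref{cor:alphaHG} applied to the special case $q=2$. By \Cref{prop:2_size_exists}, we can start from a coalition structure $\mathcal{C}$ that is 2-size core stable. By \Cref{cor:alphaHG}, such a coalition structure is automatically $(m, f(2,m))$-core stable for every $m \ge 3$, where
\begin{equation*}
f(2,m)=\max\left(1,\left\lfloor\frac{m-1}{1}\right\rfloor\frac{\alpha_m}{\alpha_2}+\frac{\ind((1)\nmid(m-1))\alpha_m}{\alpha\left((m-1)\bmod 1+1\right)}\right).
\end{equation*}

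Next, I would simplify this expression. Since $1$ divides $m-1$ for every integer $m$, the indicator vanishes, and the bound reduces to $f(2,m) = \max\bigl(1, (m-1)\alpha_m/\alpha_2\bigr)$. Under the hypothesis $(m-1)\alpha_m/\alpha_2 \le 1$, this forces $f(2,m)=1$, so $\mathcal{C}$ is $(m,1)$-core stable for all $m \ge 3$. Combined with 2-size core stability (which covers coalitions of size $1$ and $2$), this means that no coalition of any size can be strictly blocking, i.e., $\mathcal{C}$ is core stable.

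There is essentially no obstacle here beyond checking that the two ingredients plug together cleanly: the condition in the statement is exactly the condition that makes the coalition-of-size-$m$ improvement bound from \Cref{cor:alphaHG} collapse to~$1$ when $q=2$, at which point any 2-size core stable outcome promoted by \Cref{prop:2_size_exists} is already core stable. The only minor care needed is in verifying that the indicator and the modular term in \Cref{cor:alphaHG} both vanish at $q=2$, which they do because $q-1=1$ divides every positive integer.
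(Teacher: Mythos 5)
Your proof is correct and matches the paper's own argument exactly: both invoke \Cref{prop:2_size_exists} to obtain a $2$-size core stable structure and then observe that with $q=2$ the modular/indicator term in \Cref{cor:alphaHG} vanishes, so $f(2,\col)=\max\bigl(1,(\col-1)\alpha_\col/\alpha_2\bigr)=1$ under the stated hypothesis, making that structure core stable. The paper just states this more tersely; your added care with the indicator and the $\max$ is fine.
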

	\begin{proof}
		From \Cref{prop:2_size_exists} we know that every $\alpha$HG admits a $2$-size stable coalition structure. Thus, we get that 
		\begin{align*}
			f(2,\col) &= \left\lfloor\frac{\col-1}{1}\right\rfloor\frac{\alpha_\col}{\alpha_2} \le 1
		\end{align*}
		and therefore this $2$-size stable coalition structure is also core stable.
	\end{proof}
	
	Note that MFHGs are one of the transitional classes of $\alpha$-hedonic games for which the condition in \Cref{cor:core_char} holds in equality. 
	
	\begin{corollary}{(Monaco et al.\ \citep[][Theorem 14]{MMV20a})}
		\label{cor:MFSG} A core-stable coalition structure always exists in MFHGs. 
	\end{corollary}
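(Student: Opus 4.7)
The plan is to derive \Cref{cor:MFSG} as a direct application of \Cref{cor:core_char} by verifying the hypothesis $\frac{(m-1)\alpha_m}{\alpha_2} \leq 1$ for the specific $\alpha$ function that defines MFHGs. Since modified fractional hedonic games correspond to $\alpha_m = \frac{1}{m-1}$ for $m \geq 2$ (and $\alpha_1 = 0$), I would first substitute to get $\alpha_2 = 1$, and then compute
\begin{equation*}
\frac{(m-1)\alpha_m}{\alpha_2} = \frac{(m-1) \cdot \frac{1}{m-1}}{1} = 1
\end{equation*}
for every $m \geq 2$, so the condition of \Cref{cor:core_char} is satisfied (in fact with equality, as the remark preceding the corollary already notes). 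Applying \Cref{cor:core_char} then immediately yields the existence of a core stable coalition structure in every MFHG.

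The only small subtlety worth mentioning is the boundary behaviour at $m=1$, where $\alpha_1 = 0$ rather than $\frac{1}{0}$; however, singleton coalitions cannot serve as blocking coalitions under any nontrivial comparison since every agent obtains utility $0$ from them, so this case plays no role and the argument via \Cref{cor:core_char} goes through unchanged. There is essentially no obstacle here: the entire content of the result has been absorbed into \Cref{prop:2_size_exists} and \Cref{cor:core_char}, and the proof of \Cref{cor:MFSG} reduces to a one-line substitution. The only thing worth emphasizing in the write-up is that this recovers Monaco et al.'s Theorem 14 as a special instance of the more general framework developed for $\alpha$-hedonic games, rather than requiring a bespoke argument.
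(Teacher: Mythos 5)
Your proposal is correct and matches the paper's intended argument exactly: the paper derives \Cref{cor:MFSG} as an immediate consequence of \Cref{cor:core_char}, noting that MFHGs satisfy the condition $\frac{(m-1)\alpha_m}{\alpha_2}\le 1$ with equality. Your handling of the $m=1$ boundary case is a harmless extra remark that does not change anything.
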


	\subsection{Additively Separable Hedonic Games}
	Finally, for additively separable hedonic games, in which $\alpha_\col=1$ for all values of $\col$, \Cref{th:alphaHG} implies the following result.
	\begin{corollary}
		Any coalition structure in an ASHG that is $k$-improvement core stable for coalitions of size $1,\ldots,q$ is also $\left(\col, k\left(1+\left\lfloor\frac{\col-2}{q-1}\right\rfloor\right)\right)$-core stable, for any integers $\col,q$ with $\col\geq q+1$.
		\label{cor:ASHG}
	\end{corollary}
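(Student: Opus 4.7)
The plan is to derive this as a direct specialization of Theorem \ref{th:alphaHG} by setting $\alpha_\col = 1$ for all $\col \in \mathbb{N}^+$. Under this substitution, both ratios appearing in $f(q,\col)$ collapse trivially: $\frac{\alpha_\col}{\alpha_q} = 1$ and $\frac{\alpha_\col}{\alpha((\col-1)\bmod(q-1)+1)} = 1$. So the bound reduces to $\max\bigl(1,\, k\bigl(\left\lfloor\frac{\col-1}{q-1}\right\rfloor + \indmod\bigr)\bigr)$, and the remaining task is to show that this equals $k\bigl(1 + \left\lfloor\frac{\col-2}{q-1}\right\rfloor\bigr)$.

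The core of the proof is the identity $\left\lfloor\frac{\col-1}{q-1}\right\rfloor + \indmod = 1 + \left\lfloor\frac{\col-2}{q-1}\right\rfloor$, which I would establish by a short case distinction on whether $(q-1)$ divides $(\col-1)$. If $(q-1)\mid(\col-1)$, then $\indmod = 0$, write $\col - 1 = (q-1)t$ with $t \geq 2$, and observe that $\left\lfloor\frac{(q-1)t-1}{q-1}\right\rfloor = t - 1$, giving both sides the value $t$. If $(q-1)\nmid(\col-1)$, then $\indmod = 1$, write $\col - 1 = (q-1)s + r$ with $1 \leq r \leq q-2$, and check that $\left\lfloor\frac{(q-1)s + r - 1}{q-1}\right\rfloor = s$ (since $0 \leq r - 1 \leq q - 3$), so both sides equal $s + 1$.

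Finally, I would note that the outer $\max$ with $1$ is redundant: for $\col \geq q + 1$ we have $\col - 2 \geq q - 1$, hence $\left\lfloor\frac{\col-2}{q-1}\right\rfloor \geq 1$, and so $k\bigl(1 + \left\lfloor\frac{\col-2}{q-1}\right\rfloor\bigr) \geq 2k \geq 1$ for any meaningful improvement factor $k \geq 1$. I do not anticipate any significant obstacle, as the result is a purely arithmetic simplification of Theorem \ref{th:alphaHG}; the only mildly delicate step is bookkeeping the indicator $\indmod$ against the floor of $(\col-2)/(q-1)$, which the case split above handles cleanly.
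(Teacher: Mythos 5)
Your proposal is correct and matches the paper's (implicit) argument: the paper states \Cref{cor:ASHG} as an immediate specialization of \Cref{th:alphaHG} with $\alpha_\col=1$, leaving exactly the arithmetic identity $\left\lfloor\frac{\col-1}{q-1}\right\rfloor + \indmod = 1 + \left\lfloor\frac{\col-2}{q-1}\right\rfloor$ and the redundancy of the outer maximum unwritten, and your case split on whether $(q-1)$ divides $(\col-1)$ verifies both cleanly. No gaps.
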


	\section{Intuition}
	
	We now try to give some intuition for our main result and the reason it can be non-monotone. We will do so by constructing blocking coalitions in FHGs where all agents can improve exactly with the factor obtained in our main result. 
	
	Consider a $3$-size core stable coalition structure~$\CC$ in a FHG where $u_i(\CC)=1$ holds for all agents, and a blocking coalition~$C$ of size $\col$. Since 3-size stability implies 2-size stability, every edge in $G(C)$ has a weight of at most two. Further, we note that the edges of weight two cannot form a triangle, as the agents in that triangle would otherwise all get a utility of $\frac{4}{3} > 1$, witnessing a violation to 3-size stability. As a result, all edges that do not have weight two can have a weight of at most one. Assume, for illustration purposes, that we restrict ourselves to integer weights, and that all edges have weights of either two or one. Then, 3-size stability implies that the edges of weight two should form a triangle-free graph.
	
	By Mantel's theorem, we know that any triangle-free graph with $m$ nodes has at most $\left\lfloor \frac{m^2}{4}\right\rfloor$ edges. \Cref{exp:triangle} depicts maximum-size triangle-free graphs with $4$ to $7$ vertices. Even when distributing the edges of weight two as evenly as possible across the nodes, by the pigeonhole principle there must be an agent who is adjacent to at most $\left\lfloor\frac{2\left\lfloor \frac{m^2}{4}\right\rfloor}{m} \right\rfloor = \left\lfloor \frac{m}{2}\right\rfloor$ edges of weight two.  By considering the remaining edges of weight one, this agent's utility is at most 
	$$\frac{1}{m} \left(2\left\lfloor \frac{m}{2}\right\rfloor + \left(m - 1- \left\lfloor \frac{m}{2}\right\rfloor\right)\right) = 1+\frac{\left\lfloor \frac{m}{2}\right\rfloor-1}{m} = 1+\frac{\left\lfloor \frac{m-2}{2}\right\rfloor}{m},$$
	which corresponds to the non-monotone bound obtained in \Cref{cor:FHG} when $q=3$. 
	
	As a concrete instance, \Cref{exp:triangle} shows that, when $m=4$, every agent can be adjacent to exactly two edges of weight two. Likewise, when $m = 5$, there must exist an agent who is adjacent to at most two edges of weight two. The utility of this agent in the entire coalition, however, will be lower than in the coalition of size four as their average utility decreases. Thus, this agent would improve less in a coalition with five agents than in the one with four.
	
	A similar reasoning can be used to obtain the bound for 3-size stable coalition structures in ASHGs, by simply replacing all edge weights of two by one, and all edge weights of one by zero.
	
		
		
	
	\label{sec:intuition}
	
	\begin{figure}
		\centering
		\begin{tikzpicture}[
			> = stealth, 
			shorten > = 0pt, 
			auto,
			node distance = 1.5cm, 
			semithick 
			]
			\tikzstyle{every state}=[
			draw = black,
			fill = white,
			minimum size = 4mm
			]
			\node[above, font=\bfseries] at (1,2.7) {$\mathbf{m=4}$};
			\node[state] (a1) at (0,2) {$1$};
			\node[state] (a4) at (0,0) {$4$};
			\node[state] (a3) at (2,0) {$3$};
			\node[state] (a2) at (2,2) {$2$};
			
			\path[-] (a1) edge node {}  (a2);
			\path[-] (a2) edge  node  {}(a3);
			\path[-] (a3) edge  node {}(a4);
			\path[-] (a4) edge  node {} (a1);
		\end{tikzpicture}
		\hspace{0.8cm}
		\begin{tikzpicture}[
			> = stealth, 
			shorten > = 0pt, 
			auto,
			node distance = 1.5cm, 
			semithick 
			]
			\tikzstyle{every state}=[
			draw = black,
			fill = white,
			minimum size = 4mm
			] 
			\node[above, font=\bfseries] at (0,1.7) {$\mathbf{m=5}$};
			\node[state] (A) at (0,1.236) {1};
			\node[state] (B) at (-1.177,0.382) {$5$};
			\node[state] (C) at (-0.726,-1.000) {4};
			\node[state] (D) at (0.726,-1.000) {3};
			\node[state] (E) at (1.177,0.382) {2};
			\draw (A) -- (B);
			\draw (B) -- (C);
			\draw (C) -- (E);
			\draw (D) -- (E);
			\draw (E) -- (A);
			\draw (D) -- (B);
		\end{tikzpicture}
		\hspace{0.8cm}
		\begin{tikzpicture}[
			> = stealth, 
			shorten > = 0pt, 
			auto,
			node distance = 1.5cm, 
			semithick 
			]
			\tikzstyle{every state}=[
			draw = black,
			fill = white,
			minimum size = 4mm
			] 
			\node[above, font=\bfseries] at (0,1.7) {$\mathbf{m=6}$};
			\node[state] (A) at (0.66666667, 1) {2};
			\node[state] (B) at (1.154700538,0) {3};
			\node[state] (C) at (0.66666667, -1) {4};
			\node[state] (D) at (-0.66666667, -1) {5};
			\node[state] (E) at (-1.154700538,0) {6};
			\node[state] (F) at (-0.6666667, 1) {1};
			\draw (A) -- (B);
			\draw (B) -- (C);
			\draw (C) -- (D);
			\draw (D) -- (E);
			\draw (E) -- (F);
			\draw (F) -- (A);
			\draw (F) -- (C);
			\draw (A) -- (D);
			\draw (B) -- (E);
		\end{tikzpicture}
		\hspace{0.8cm}
		\begin{tikzpicture}[
			> = stealth, 
			shorten > = 0pt, 
			auto,
			node distance = 1.5cm, 
			semithick 
			]
			\tikzstyle{every state}=[
			draw = black,
			fill = white,
			minimum size = 4mm
			] 
			\node[above, font=\bfseries] at (0,1.6) {$\mathbf{m=7}$};
			\node[state] (G) at (0, 1.236) {1};
			\node[state] (A) at (-0.9663437123304849, 0.7706333950973947) {7};
			\node[state] (B) at (-1.205010899456734,-0.27503587437000454
			) {6};
			\node[state] (C) at (-0.536280301549302,-1.1135975207273898
			) {5};
			\node[state] (D) at (0.5362803015493017,-1.11359752072739) {4};
			\node[state] (E) at (1.205010899456734,-0.2750358743700048
			) {3};
			\node[state] (F) at (0.966343712330485,0.7706333950973945
			) {2};
			
			\draw (A) -- (B);
			\draw (B) -- (C);
			\draw (D) -- (E);
			\draw (E) -- (F);
			\draw (F) -- (G);
			\draw (G) -- (A);
			
			\draw (G) -- (C);
			\draw (G) -- (D);
			\draw (E) -- (A);
			\draw (B) -- (D);
			\draw (B) -- (F);
			\draw (G) -- (C);
			\draw (E) -- (C);
			\draw (D) -- (G);
		\end{tikzpicture}
		\caption{Examples of maximum-size triangle-free graphs with $4$ to $7$ vertices.}
		\label{exp:triangle}
	\end{figure}
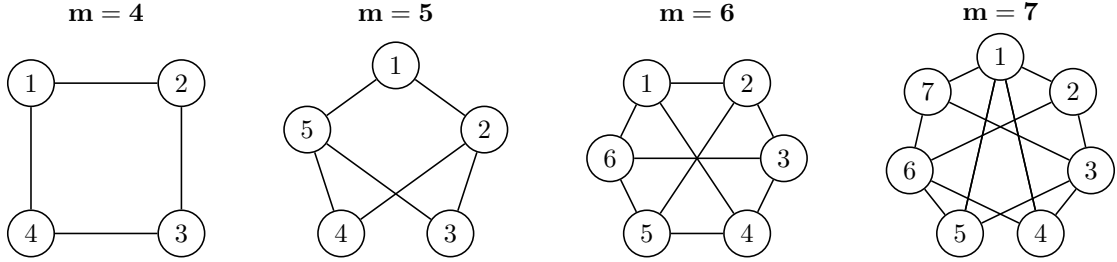

	\section{Lower Bounds}
	\label{sec:lowerBounds}
	In this section, we will illustrate the tightness of the bound in \Cref{th:alphaHG}. We will focus on the setting where $k=1$ in the formulation of \Cref{th:alphaHG}, and will therefore provide lower bounds on the improvement factor of agents in $q$-size stable coalition structures (\Cref{cor:alphaHG}). Note that any lower bound for $k = 1$ can easily be modified to a lower bound for a larger factor $k'$ by multiplying every utility by $k'$. 
	\subsection{Hospitable Hedonic Games}
	\label{subsec:hospit}
	To construct a lower bound for a large class of hedonic games we consider the following subclass of $\alpha$HGs.
	\begin{definition}
		\label{def:hospitable} 
		A function $\alpha \colon [n] \to \mathbb{R}$ is \emph{hospitable} if $\frac{\alpha_q}{\alpha_{q-1}} \geq \frac{q-2}{q-1}$ for all integers $q \ge 2$. Accordingly, an $\alpha$HG is hospitable if $\alpha$ is hospitable.
	\end{definition}
	
	The intuition behind hospitable $\alpha$HGs is that the utility of an agent in a coalition of size $q-1$ will never decrease with more than a factor $\frac{q-2}{q-1}$ when an additional agent is added to that coalition. The class of hospitable $\alpha$HGs is rather large, and includes the previously studied models of ASHGs, FHGs, and MFHGs. As discussed in \Cref{sec:novel}, where we illustrate how our results can be applied to novel hedonic games models, hospitability is a natural requirement that is commonly satisfied by other variants.
	
	First, we can show that the bound derived in \Cref{th:alphaHG} is tight for hospitable $\alpha$HGs when $(\col - 1) \bmod (q-1) = 0$.
	
	\begin{theorem}
		For any hospitable $\alpha$, there exists an instance of an $\alpha$HG that contains a $q$-size core stable coalition structure
		which is not $\left(\col,\delta\right)$-core stable for any $\delta<\frac{\alpha_\col(\col-1)}{\alpha_q(q-1)}$, with $q,\col\in \naturals$ and $\col\geq q+1$.
		\label{thrm:mod_tight}
	\end{theorem}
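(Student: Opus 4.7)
The plan is to exhibit, for each pair of integers $q,\col$ with $\col\geq q+1$, a simple instance realizing the claimed lower bound. I would take $n = q\col$ agents $A=\{a_1,\dots,a_n\}$ with the complete utility graph of unit weights, i.e.\ $u(i,j)=1$ for every pair of distinct agents $i,j$, and let $\CC$ be any partition of $A$ into $\col$ coalitions of size exactly $q$. Under $\CC$, every agent then receives utility $u_i(\CC) = \alpha_q(q-1)$.

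The key step is to verify that $\CC$ is $q$-size core stable. Rearranging the hospitability condition $\alpha_q/\alpha_{q-1}\geq (q-2)/(q-1)$ gives $\alpha_q(q-1)\geq \alpha_{q-1}(q-2)$, and iterating this inequality starting from $\alpha_1\cdot 0 = 0$ shows that the map $k\mapsto \alpha_k(k-1)$ is non-decreasing for $1\leq k\leq q$. Hence for any coalition $S\subseteq A$ with $|S|\leq q$, each agent in $S$ would receive utility $\alpha_{|S|}(|S|-1) \leq \alpha_q(q-1)$, so no agent in $S$ strictly improves over $\CC$. This argument depends only on $|S|$ and is independent of how $S$ aligns with the blocks of $\CC$, so it handles within-block and cross-block deviations uniformly.

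Finally, I would take $C$ to be any subset of $\col$ agents. By completeness of the graph, every agent in $C$ receives utility exactly $\alpha_\col(\col-1)$, yielding an improvement factor of precisely $\frac{\alpha_\col(\col-1)}{\alpha_q(q-1)}$ for every agent in $C$. Therefore $C$ is an $(\col,\delta)$-blocking coalition for every $\delta<\frac{\alpha_\col(\col-1)}{\alpha_q(q-1)}$, which is exactly what the theorem demands. The only non-routine step is the monotonicity of $k\mapsto \alpha_k(k-1)$ coming from hospitability; once this is established, both the $q$-size stability of $\CC$ and the improvement calculation for $C$ follow by direct inspection.
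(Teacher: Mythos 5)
Your proof is correct and follows essentially the same route as the paper's: a complete uniform-weight graph on the deviating coalition, with the hospitability inequality iterated to show that any coalition of size at most $q$ yields utility at most $(q-1)\alpha_q$ times the edge weight, hence no small coalition blocks. The only difference is normalization --- the paper sets $u_i(\CC)=1$ and edge weights $\frac{1}{\alpha_q(q-1)}$, while you use unit edge weights and explicitly realize $u_i(\CC)=\alpha_q(q-1)$ via a partition into blocks of size $q$, which is a slightly more self-contained version of the same construction.
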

	\begin{proof}
		We construct an instance of an $\alpha$HG that is $q$-size core stable, but which allows for a blocking coalition of size $\col\geq q+1$ in which all agents improve with at least a factor $\left(\frac{\alpha_\col(\col-1)}{\alpha_q(q-1)}\right)$. Given a coalition structure~$\mathcal{C}$ in which $u_i(\mathcal{C})=1$ for all agents~$i$, and a blocking coalition~$C$ of size~$\col$, let $u(i,j)=\frac{1}{\alpha_q(q-1)}$ for all agent pairs $\{i,j\}\subset C$. The resulting $\alpha$HG is $q$-size core stable, since for any coalition $C'$ of size at most $q$ the utility of agent~$i$ in $C'$ is at most $\frac{(\lvert C'\rvert - 1) \alpha_{\lvert C'\rvert}}{\alpha_q(q-1)} \le \frac{\alpha_q(q-1)}{\alpha_q(q-1)} = 1$, 
		where the inequality is implied by recursively applying the definition of a hospitable $\alpha$HG. As $u_i(\mathcal{C})=1$ and since the utility of agent $i$ in $C$ is $\frac{\alpha_\col (\col-1)}{\alpha_q (q-1)}$ for all agents $i\in \mathcal{C}$, this implies that the coalition structure~$\mathcal{C}$ is not $\left(\col,\delta\right)$-core stable for any $\delta<\frac{\alpha_\col(\col-1)}{\alpha_q(q-1)}$.
	\end{proof}
	
	Since if $(\col - 1) \bmod (q-1) = 0$ it holds that $\lfloor \frac{\col-1}{q-1} \rfloor = \frac{\col-1}{q-1}$ and $\indmod\alpha_\col = 0$, this implies that the bound obtained in \Cref{th:alphaHG} is tight for $(\col - 1) \bmod (q-1) = 0$ and thus also for $q = 2$. Figure \ref{fig:LB_FHG} illustrates the tightness of this lower bound for FHGs. Further, we show tightness of the bound in \Cref{th:alphaHG} for hospitable $\alpha$HGs when $q = 3$.

	\begin{theorem}
		\label{theorem:LB_3}
		For any hospitable $\alpha$, there exists an instance of an $\alpha$HG that contains a $3$-size core stable coalition structure~$\CC$ which is not $\left(\col,\delta\right)$-core stable for any $\delta<f(3,\col)$, with $q,\col\in \naturals$ and $\col\geq 4$.
	\end{theorem}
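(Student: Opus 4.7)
The plan is to handle the two parities of $\col$ separately. For odd $\col$, the quantity $(\col-1)\bmod 2$ vanishes, so $\indmod = 0$ and $f(3,\col)$ collapses to $\frac{\col-1}{2}\cdot\frac{\alpha_\col}{\alpha_3} = \frac{\alpha_\col(\col-1)}{\alpha_3(q-1)}$ at $q=3$. This is exactly the lower bound already delivered by Theorem~\ref{thrm:mod_tight}, so the odd case needs no further work.

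For even $\col$, I would adapt the extremal configuration hinted at in \Cref{sec:intuition}. Starting from a coalition structure $\CC$ in which every agent of the blocking coalition $C$ has utility $1$ (arranged exactly as in the setup of Theorem~\ref{thrm:mod_tight}), I partition $C$ into two halves $L$ and $R$ of size $\col/2$ each and assign ``heavy'' weights $w_h = 1/\alpha_2$ to all inter-side pairs and ``light'' weights $w_l = 1/\alpha_3 - 1/\alpha_2$ to all intra-side pairs. The heavy edges thus form the triangle-free complete bipartite graph $K_{\col/2,\col/2}$ featured in \Cref{sec:intuition}, and hospitability yields the key inequality $2\alpha_3 w_l = 2 - 2\alpha_3/\alpha_2 \le 1$.

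The verification of $3$-size core stability of $\CC$ then proceeds by exhausting the small subsets of $C$: every pair has utility at most $\alpha_2 w_h = 1$; every triple confined to one side consists of three light edges and yields utility $2\alpha_3 w_l \le 1$ to each member; and every triple split $2$-$1$ across the two sides gives its two same-side agents utility $\alpha_3(w_h + w_l) = \alpha_3/\alpha_3 = 1$. In each configuration at least one participant fails to strictly improve upon its utility in $\CC$, so $\CC$ is indeed $3$-size core stable.

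Finally, each agent in $C$ has $\col/2$ inter-side and $\col/2 - 1$ intra-side neighbours, so its utility inside $C$ equals
\[
\alpha_\col\bigl[(\col/2)w_h + (\col/2-1)w_l\bigr] = \frac{\alpha_\col}{\alpha_2} + \frac{\alpha_\col(\col-2)}{2\alpha_3} = f(3,\col),
\]
which strictly exceeds every $\delta < f(3,\col)$. The main delicate point is the joint choice of $(w_h,w_l)$: the bipartite topology is forced because any heavy triangle would immediately break $3$-size stability, while the specific value $w_l = 1/\alpha_3 - 1/\alpha_2$ is exactly what lands the two same-side agents of every mixed triple at utility $1$, simultaneously saturating the intra-side and the $2$-$1$-split $3$-size constraints so that the bound $f(3,\col)$ is attained rather than undershot.
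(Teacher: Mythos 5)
Your construction coincides with the paper's: the same reduction of odd $\col$ to \Cref{thrm:mod_tight}, and for even $\col$ the same bipartition into two halves with inter-side weight $1/\alpha_2$ and intra-side weight $1/\alpha_3 - 1/\alpha_2$, verified over pairs, same-side triples, and $2$-$1$ split triples exactly as the paper does, ending with the same utility computation $\frac{\alpha_\col}{\alpha_2}+\frac{\alpha_\col(\col-2)}{2\alpha_3}=f(3,\col)$. The only omission is the degenerate case where the expression inside the maximum defining $f(3,\col)$ is below $1$ (so that $f(3,\col)=1$ and your construction does not literally attain it); the paper dispatches this in one sentence, and it is trivial.
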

	\begin{proof}
		Note that when $f(3,\col)=1$, the result follows directly. Moreover, when $\col$ is odd, so $(\col-1) \bmod 2 = 0$, the result follows from \Cref{thrm:mod_tight}. When $\col$ is even and when $f(3,\col)>1$, we first observe that 
		\begin{align*}
			f(3,\col) = \left\lfloor\frac{\col-1}{2}\right\rfloor\frac{\alpha_\col}{\alpha_3}+\frac{\ind\left((\col-1)\bmod 2\right)\alpha_\col}{\alpha\left((\col-1)\bmod 2+1\right)} 
			= \frac{\col-2}{2}\frac{\alpha_\col}{\alpha_3}+\frac{\alpha_\col}{\alpha_2}.
		\end{align*}
		Now we assume that we are given $\col$ agents $c_1, \dots, c_\col$ with $u_i(\CC) = 1$ for each $c_i$. We partition the agents into two sets $C_1, C_2$ with $C_1 = \{c_1, \dots, c_{\frac{\col}{2}}\}$ and $C_2 = \{c_{\frac{\col}{2} + 1}, \dots, c_\col\}$. 
		If two agents $c_i$ and $c_j$ are in the same set, we define $u(i,j) = \frac{1}{\alpha_3}  - \frac{1}{\alpha_2}$. Otherwise, we set $u(i,j) = \frac{1}{ \alpha_2}$.
		
		For any two agents $\{c_i, c_j\}$ it thus holds that $u_i(\{c_i, c_j\}) \le \alpha_2 \max(\frac{1}{\alpha_3}  - \frac{1}{\alpha_2}, \frac{1}{\alpha_2})= \max(\frac{\alpha_2}{\alpha_3} - 1, 1) \le 1$ and, therefore, these two agents do not form a blocking coalition. Further, for any three agents $\{c_i, c_j, c_k\}$ we have two cases: (i) either all three agents come from the same set, then we get that $u_i(\{c_i, c_j, c_k\}) = \frac{2\alpha_3}{\alpha_3}  - \frac{2\alpha_3}{\alpha_2} = 2 - \frac{2\alpha_3}{\alpha_2} \le 2 - \frac{\alpha_3}{\alpha_3} = 1$; (ii) one agent (without loss of generality $c_k$) has to be from a different partition than the other two; then it holds that $u_i(\{c_i, c_j, c_k\}) = \frac{\alpha_3}{\alpha_3}  - \frac{\alpha_3}{\alpha_2} + \frac{\alpha_3}{\alpha_2} = 1$.
		Hence, this coalition is $3$-stable. 
		
		Finally, we get that \begin{align*}
			u_i(\{c_1, \dots, c_\col\}) = \alpha_\col \left(\frac{\col-2}{2} \left(\frac{1}{\alpha_3} - \frac{1}{\alpha_2}\right) + \frac{\col}{2} \frac{1}{\alpha_2}\right)= \frac{\col-2}{2}\frac{\alpha_\col}{\alpha_3}+\frac{\alpha_\col}{\alpha_2} = f(3,\col).
		\end{align*}
	\end{proof}
	
	\Cref{subsec:odd_even} provides an example of a non-hospitable $\alpha$HG where the bound in \Cref{cor:alphaHG} is not tight.

	\subsection{Fractional and Additively Separable Hedonic Games}
	Lastly, we provide additional tightness results of the bound in \Cref{cor:alphaHG} for FHGs and ASHGs.  We defer the proofs of Theorems~\ref{theorem:LB_4} and \ref{theor:ASHG_4} to Appendices~\ref{ap:proof_FHG_4} and~\ref{ap:proof_ASHG_4}.
	\begin{theorem}
		For both FHGs and ASHGs there exist instances in which a $q$-size core stable coalition structure~$\CC$ exists which is not $\left(q+1, \delta\right)$-core stable
		\begin{itemize}
			\item  for any $\delta < \frac{q+2}{q+1}$ for FHGs;
			\item  for any $\delta < 2$ for ASHGs.
		\end{itemize}
		
		\label{theor:q_q+1}
	\end{theorem}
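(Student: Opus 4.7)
The plan is to exhibit, for each case, an explicit instance in which $q+1$ principal agents $c_1,\ldots,c_{q+1}$ have their nonzero pairwise weights arranged along the cycle $c_1c_2\cdots c_{q+1}c_1$, and each $c_i$ is paired in $\CC$ with an auxiliary agent $d_i$ through the weight $u(c_i,d_i)=\alpha_2^{-1}$ (with all other weights incident to $d_i$ set to $0$). This realizes $u_i(\CC)=1$ for every principal agent while ensuring that each $d_i$ attains utility at most $\alpha_t\cdot\alpha_2^{-1}\leq 1$ in any coalition of size $t\geq 2$, so $d_i$ can never strictly improve and cannot appear in any strictly blocking deviation. The verification of $q$-size core stability therefore reduces to subsets of $\{c_1,\ldots,c_{q+1}\}$.

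For the FHG construction I would assign weight $2$ to the $q+1$ cycle edges and weight $1$ to each chord among the $c_i$'s. For a subset $S\subseteq\{c_1,\ldots,c_{q+1}\}$ with $|S|=t\leq q$ and any $c_i\in S$, writing $h_i\in\{0,1,2\}$ for the number of cycle-neighbors of $c_i$ inside $S$, one gets $u_i(S)=(t-1+h_i)/t$. Strict blocking would force $h_i=2$ for every $c_i\in S$, i.e.\ $S$ would induce a $2$-regular subgraph of $C_{q+1}$; but since $C_{q+1}$ is a single cycle, the only such subgraph is the full cycle, contradicting $t\leq q$. A direct count then gives $u_i(C)=(2\cdot 2+(q-2)\cdot 1)/(q+1)=(q+2)/(q+1)$ on the grand coalition $C=\{c_1,\ldots,c_{q+1}\}$, which is the claimed lower bound for FHGs.

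The ASHG construction is parallel: keep the same cycle $C_{q+1}$ with weight $1$ on each cycle edge and $0$ on each chord. Since $\alpha_\col=1$, the utility of $c_i\in S$ is simply $h_i$, so the same $2$-regularity argument rules out strict blocking of any $S$ with $|S|\leq q$, and the grand coalition provides each $c_i$ with exactly $2$ cycle-neighbors, hence utility $2=2\cdot u_i(\CC)$.

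The main obstacle I anticipate is the $q$-size stability check for \emph{all} subsets of size at most $q$ rather than only size $q$, but it reduces cleanly to the structural observation that a proper subset of the vertices of a single cycle always contains at least one vertex of degree at most $1$ in the induced subgraph; this is exactly what obstructs the $2$-regularity that a strictly blocking coalition would require. Handling the auxiliary agents $d_i$ is routine, which is why essentially the same cycle-based construction covers both FHGs and ASHGs.
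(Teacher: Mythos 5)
Your construction is correct and essentially identical to the paper's: a cycle of heavy edges on the $q+1$ agents with uniformly lighter chords, together with the observation that any proper subset of a cycle's vertex set contains a vertex of induced degree at most one, which blocks the required $2$-regularity. The only cosmetic difference is that you realize $u_i(\CC)=1$ explicitly via auxiliary partner agents $d_i$, whereas the paper simply posits these initial utilities.
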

	\begin{proof}
		First, for  FHGs assume we are given $q+1$ agents $a_1, \ldots, a_{q+1}$ with $u_i(\CC)=1$. Let the edge weights be such that the edges with weight two form a cycle and all other edges have weight one, i.e., let $u(i,j) = 2$ for all $(i,j)\subset \CC$ for which $j=i+1$, let $u(n,1) = 2$, and let $u(k,l)=1$ for all other edges. Note that in each subset $C\subset\CC$ with $\vert C\vert < q+1$ there is at least one agent who is adjacent to at most one edge of weight two to the other agents in~$C$, because the edges with weight two form a cycle over all $q+1$ agents. Hence, for each subset $C\subset \CC$ with $\vert C\vert < q+1$ there is at least one agent with a utility of at most $\frac{2+\vert C\vert-2}{\vert C\vert}=1$, which implies that $\CC$ is $q$-size core stable. Furthermore, the coalition of all $q+1$ agents offers a utility of $\frac{2\cdot 2 + q-2}{q+1} = \frac{q+2}{q+1}$. As a result, $\CC$ is not $\left(q+1, \frac{q+2}{q+1}-\varepsilon\right)$-core stable for any $\varepsilon>0$.
		
		To adapt this to ASHGs, replace all edge weights of one by zero, and all weights of two by one. Now in the entire coalition, everyone would get a utility of two while in any smaller subset there must exist a voter getting a utility of at most one.
	\end{proof}
	
	\begin{theorem}
		\label{theorem:LB_4}
		There exists a 4-size core stable coalition structure $\mathcal{C}$ in a FHG which is not $\left(\col,\delta\right)$-core stable, for any $\delta < 1+\frac{\left\lfloor\frac{1}{3}(\col-2)\right\rfloor}{\col}$, and for any integer $\col\geq 5$.
	\end{theorem}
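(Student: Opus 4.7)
The plan is to exhibit, for each $\col \geq 5$, an explicit FHG instance together with a $4$-size core stable coalition structure $\CC$ in which every agent has utility exactly $1$, such that the grand coalition $C$ of size $\col$ yields every agent utility equal to $f(4,\col)=1+\lfloor(\col-2)/3\rfloor/\col$. I proceed by case analysis on $(\col-1)\bmod 3$, matching the residue structure that appears in \Cref{th:alphaHG}.

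When $(\col-1)\bmod 3=0$, i.e.\ $\col\equiv 1\pmod 3$, I invoke \Cref{thrm:mod_tight}. Since FHGs are hospitable (as $\alpha_q/\alpha_{q-1}=(q-1)/q\geq (q-2)/(q-1)$ holds for all $q\geq 2$), taking all edge weights inside $C$ equal to $1/(\alpha_q(q-1))=4/3$ gives a $4$-size stable outcome whose grand coalition achieves utility $4(\col-1)/(3\col)$, which coincides with $f(4,\col)$ in this residue class.

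When $\col\equiv 0\pmod 3$, say $\col=3k$, I partition the agents into three equal groups of size $k$ and use intra-group weight $1$ and inter-group weight $3/2$. Each agent then has weighted degree $(k-1)+3k=4k-1$, giving utility $(4k-1)/(3k)=f(4,\col)$ in $C$. A short case analysis over how a coalition of size $\leq 4$ distributes across the three groups shows that some agent always attains utility at most~$1$: e.g.\ a $2{+}2$ split gives every agent utility $(1+2\cdot 3/2)/4=1$; a $2{+}1{+}1$ split gives the pair-agents utility $(1+3/2+3/2)/4=1$; $3{+}1$ and $3$-same-group splits are strictly dominated.

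The main obstacle is the case $\col\equiv 2\pmod 3$, $\col=3k+2$, where the clean three-group partition fails because $\col$ is not divisible by $3$. Ad hoc attempts to either unbalance the groups or attach two auxiliary agents to a size-$3k$ three-group construction run into a fundamental tension: ensuring non-blockingness of the size-$3$ coalition with one agent from each group forces inter-group weights to be at most $3/2$, while non-blockingness of the size-$4$ coalitions of type $2{+}2$ and $3{+}1$ together with the prescribed total weighted degree $4k+2$ push the weights in the opposite direction. I expect to resolve this case by constructing a tailored instance, possibly identified via the ILP of \Cref{sec:method} and permitting asymmetric or negative weights, in which the tight utility $(4k+2)/(3k+2)$ is realized on the full coalition and non-blockingness is verified by a finite case analysis over splits of size-$\leq 4$ coalitions across the underlying structure. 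The combinatorial search for such an instance and the verification of all stability inequalities is the technically hard step; the remainder of the proof reduces to the routine checks already carried out in the other two cases.
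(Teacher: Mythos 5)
Your proposal is incomplete: the case $\col\equiv 2\pmod 3$ is explicitly left unresolved, and since that residue class accounts for a third of all $\col\geq 5$ (including $\col=5$ itself, the smallest case of the theorem), what you have is not a proof. Your first two cases do check out --- invoking \Cref{thrm:mod_tight} when $3\mid(\col-1)$ is exactly right, since $\frac{4(\col-1)}{3\col}=1+\frac{1}{\col}\lfloor\frac{\col-2}{3}\rfloor$ in that residue class, and your three-group construction for $\col=3k$ with intra-group weight $1$ and inter-group weight $\frac{3}{2}$ does give every agent utility $\frac{4k-1}{3k}=f(4,3k)$ in the grand coalition while every coalition of size at most $4$ contains an agent with utility at most $1$ (the $2{+}2$ and $2{+}1{+}1$ splits hit exactly $1$, as you note). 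But the ``fundamental tension'' you describe in the remaining case is an artifact of your ansatz, not of the problem: it arises only because you insist on a balanced three-group structure tied to divisibility by $3$.

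The paper sidesteps this entirely with a single \emph{two}-group construction that works uniformly for all residues: take $\lfloor\frac{\col-2}{3}\rfloor+1$ ``two-valued'' agents and $\col-\lfloor\frac{\col-2}{3}\rfloor-1$ ``one-valued'' agents, all with $u_i(\CC)=1$, with weight $0$ between two-valued agents, weight $2$ across the two groups, and weight $1$ between one-valued agents. A short case analysis over how many two-valued agents a coalition of size $3$ or $4$ contains shows some agent always gets utility at most $1$, and in the grand coalition the one-valued agents get exactly $1+\frac{1}{\col}\lfloor\frac{\col-2}{3}\rfloor$ while the two-valued agents get at least that. The key idea you are missing is that the group sizes should be chosen directly from the target bound (namely $\lfloor\frac{\col-2}{3}\rfloor+1$) rather than from a partition of $\col$ into equal parts; with that choice the asymmetric weights $\{0,2,1\}$ make all the size-$\leq 4$ stability inequalities close simultaneously, with no divisibility hypothesis needed. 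To salvage your write-up you would either need to adopt such a uniform construction or actually produce and verify the tailored instance you defer to the ILP, which at present you have not done.
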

	
	\begin{theorem}
		\label{theor:ASHG_4}
		There exists a 4-size core stable coalition structure~$\CC$ in an ASHG which is not $(\col,\delta)$-core stable, for any $\delta < 1+\left\lfloor\frac{\col-2}{3} \right\rfloor$, and for any integer $\col\geq5$.
	\end{theorem}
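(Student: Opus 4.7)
The plan is to construct, for every integer $m \geq 5$, an explicit symmetric ASHG on $m$ agents with a coalition structure $\CC$ assigning each agent utility $1$, such that $\CC$ is $4$-size core stable and yet some coalition of size $m$ blocks with every agent improving by at least a factor $1 + \lfloor (m-2)/3 \rfloor$. The overall strategy mirrors the FHG construction of \Cref{theorem:LB_4} and uses the FHG-to-ASHG conversion principle suggested by \Cref{theor:q_q+1} (FHG weights $1$ and $2$ become ASHG weights $0$ and $1$).

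I would split into cases based on $m \bmod 3$. For $m \equiv 0 \pmod{3}$, I partition the $m$ agents into three equal groups of size $m/3$ with intra-group utility $0$ and inter-group utility $1/2$; every agent then has grand-coalition utility $m/3 = 1 + \lfloor (m-2)/3 \rfloor$. For $m \equiv 1 \pmod{3}$, the same weight scheme with group sizes $\lceil m/3 \rceil, \lfloor m/3 \rfloor, \lfloor m/3 \rfloor$ suffices, the minimum grand-coalition utility being realized by agents in the largest group and equal to $1 + \lfloor (m-2)/3 \rfloor$. For $m \equiv 2 \pmod{3}$, the $5$-cycle $C_5$ with unit weights handles $m = 5$, the M\"obius--Kantor graph (3-regular, girth $6$) with unit weights handles $m = 8$, and for larger $m = 3k+2$ a combination of a partition with graph-theoretic augmentations or carefully chosen fractional-weight edges is needed.

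Verification proceeds in two steps. First, I would establish $4$-size core stability by case analysis on $4$-subset configurations: for partition constructions with three groups and weights $(0, 1/2)$, the configurations $(4,0,0)$, $(3,1,0)$, $(2,2,0)$, and $(2,1,1)$ each directly yield an agent whose within-subset utility is at most $1$, while $(1,1,1,1)$ is ruled out by there being only three groups; for graph-based constructions such as the $5$-cycle or M\"obius--Kantor, girth at least $5$ precludes any $4$-vertex induced subgraph from having minimum degree $\geq 2$, which is exactly what $4$-size stability with $0/1$ weights requires. Second, one computes the grand-coalition utility directly from the construction and checks that it matches the target bound.

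The principal obstacle is the case $m \equiv 2 \pmod{3}$ for $m \geq 11$: here a pure three-group partition falls short of the bound, and the Moore bound precludes the existence of a $(k+1)$-regular girth-$5$ graph on $3k+2$ vertices for most $k \geq 3$, so no unit-weight construction suffices. Handling this case requires either an ad-hoc graph family (analogous to the M\"obius--Kantor example for $m = 8$) or a hybrid construction that superimposes fractional-weight edges on a partition structure, with weights delicately chosen to push the minimum grand-coalition utility up to exactly $1 + \lfloor (m-2)/3 \rfloor$ while keeping every $4$-size stability constraint satisfied. Since the integer-programming approach described in \Cref{sec:method} already generates tight instances for the relevant small cases, a viable route is to extract the combinatorial structure from those machine-generated witnesses and show it extends to a family covering all residues.
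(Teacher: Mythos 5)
There is a genuine gap, and you have correctly located it yourself: the case $\col\equiv 2\pmod 3$ is not handled, and no amount of unit-weight graph theory will handle it. Your cases $\col\equiv 0$ and $\col\equiv 1\pmod 3$ are fine (the three-balanced-groups construction with inter-group weight $\tfrac12$ does verify, and is a legitimate alternative to what the paper does), but two concrete problems remain. First, your proposed witness for $\col=8$ is wrong: the M\"obius--Kantor graph has $16$ vertices, not $8$, and as you yourself observe via the Moore bound, no $3$-regular girth-$5$ graph on $8$ vertices exists --- yet with unit weights and uniform baseline utility $1$ you need every agent to have degree at least $3$ while no triangle and no $4$-cycle may appear, which is exactly such a graph. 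Second, deferring to ``extract the structure from the ILP witnesses and show it extends'' is not a proof; the theorem requires a construction for every $\col\equiv 2\pmod 3$.

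The idea you are missing is to drop the assumption that every agent has baseline utility $1$. The paper's construction for $\col=3k+2$ takes a set $A_1$ of $k$ agents with $u_i(\CC)=2$ and a set $A_2$ of $2k+2$ agents with $u_j(\CC)=1$, puts weight $1$ on every $A_1$--$A_2$ pair and on a perfect matching inside $A_2$ (possible since $|A_2|$ is even), and weight $0$ elsewhere. Every small coalition then contains either an $A_1$-agent seeing at most two $A_2$-neighbours (utility at most $2$, its baseline) or an $A_2$-agent seeing at most its one matching partner plus one $A_1$-agent (utility at most $1$, its baseline), so $4$-size stability holds; in the grand coalition each $A_1$-agent gets $2k+2=2(k+1)$ and each $A_2$-agent gets $k+1$, so everyone improves by exactly the factor $k+1=1+\lfloor(\col-2)/3\rfloor$. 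Non-uniform baselines are what let you escape the degree-versus-girth obstruction: agents who are allowed high grand-coalition utility are precisely those whose baseline is high enough that no small coalition can tempt them. (The paper uses the same device for $\col\equiv 0\pmod 3$ and dispatches $\col\equiv 1\pmod 3$ via \Cref{thrm:mod_tight}, but for those residues your uniform-baseline construction happens to work too.)
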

	
	\subsection{Summary Lower Bounds}
	\label{subsec:summ}
	\begin{table}[t]
		\caption{Additional values of $(q,\col)$ for which we found instances proving the tightness of the bound in \Cref{cor:alphaHG} by using the integer linear programming approach described in \Cref{sec:method}.\label{ta:additional_values}}
		\centering
		\begin{tabular}{ll}
			\toprule
			\multirow{2}{*}{\textbf{FHG}} & $(5, \col\leq 8)$, $(6, \col\leq 10)$, $(7, \col\leq 10)$,\\
			& $(8, \col\leq 11)$\\ &\\
			\multirow{3}{*}{\textbf{ASHG}} & $(5,\col\leq 8)$, $(6,\col\leq 10)$, $(7, \col\leq12)$,\\
			& $(8, \col\leq13)$, $(9,\col\leq 13)$, $(10, \col\leq 13)$,\\
			& $(11,\col\leq 13)$, $(12,13)$\\
			\bottomrule
		\end{tabular}
	\end{table}
	While we were not able to show the tightness of \Cref{th:alphaHG} for other values of $\alpha$, $q$, and $\col$ than the ones described in this section, we found some examples to show the tightness of the result for additional values of $(q,\col)$ when $k=1$ that are not covered by Theorems \ref{thrm:mod_tight}-\ref{theor:ASHG_4} through the use of the integer linear programming approach described in \Cref{sec:method}. \Cref{ta:additional_values} summarizes these results. The smallest case which is unknown (both for fractional and additively separable hedonic games) is the tightness for $q = 5$ and $m = 10$ (see \Cref{ta:additional_values} and \Cref{thrm:mod_tight}). The best lower bound we could find for $q=5$ and $\col=10$ in FHGs is a $(10,1.156)$-stable instance, for example, while \Cref{cor:alphaHG} proves $(10, 1.2)$-stability and \Cref{thrm:mod_tight} shows a lower bound of $(10, 1.125)$. We additionally visualized the tight instances for $q = 5$ and $m \in \{7,8\}$ in Appendix~\ref{app:tight_figures}. A summary of our results can be found in \Cref{ta:summary}, while \Cref{fig:LB_FHG} visualizes the lower bounds for the specific case of FHGs.

	\begin{table*}[t]
		\small
		\caption{Summary and proven tightness results for the values of $f(q,\col)$ such that every $q$-size core stable coalition structure is $(q,f(q,\col))$-stable, with $\col\geq q+1$, for various types of hedonic games. {\no} indicates that tightness is still open. \label{ta:summary}}
		\centering
		\begin{tabularx}{\textwidth}{lc|cccc}
			\toprule
			&&\multicolumn{4}{c}{\textbf{Tightness proof for\ldots}}\\
			\textbf{Hedonic Game} & $\mathbf{f(q,\col)}$ & $(q-1) \mid (m-1)$ & $q=3$ & $q=4$ & Other values of $(q,\col)$\\
			\midrule
			$\alpha$HG & Cor.\ \ref{cor:alphaHG} &\no &\no &\no &$\varnothing$\\
			Hospitable $\alpha$HG& Cor.\ \ref{cor:alphaHG}&Th.\ \ref{thrm:mod_tight}&Th.\ \ref{theorem:LB_3} &\no&$\varnothing$\\
			FHG & $1 + \frac{1}{\col}\left\lfloor \frac{\col-2}{q-1} \right\rfloor$ &Th.\ \ref{thrm:mod_tight} & Th.\ \ref{theorem:LB_3}& Th.\ \ref{theorem:LB_4}&$(q,q+1)$ (Th.\ \ref{theor:q_q+1}) \& \Cref{ta:additional_values}\\
			ASHG & $1+\lfloor\frac{\col-2}{q-1}\rfloor$ &Th.\ \ref{thrm:mod_tight} &Th.\ \ref{theorem:LB_3} &Th.\ \ref{theor:ASHG_4}&$(q,q+1)$ (Th.\ \ref{theor:q_q+1}) \& \Cref{ta:additional_values}\\
			MFHG & 1&Cor.\ \ref{cor:MFSG} & Cor.\ \ref{cor:MFSG} & Cor.\ \ref{cor:MFSG} & \text{all combinations (Cor.\ \ref{cor:MFSG})}\\
			\bottomrule
		\end{tabularx}
	\end{table*}
	
	\begin{figure}
		\centering
		\includegraphics[width=\linewidth]{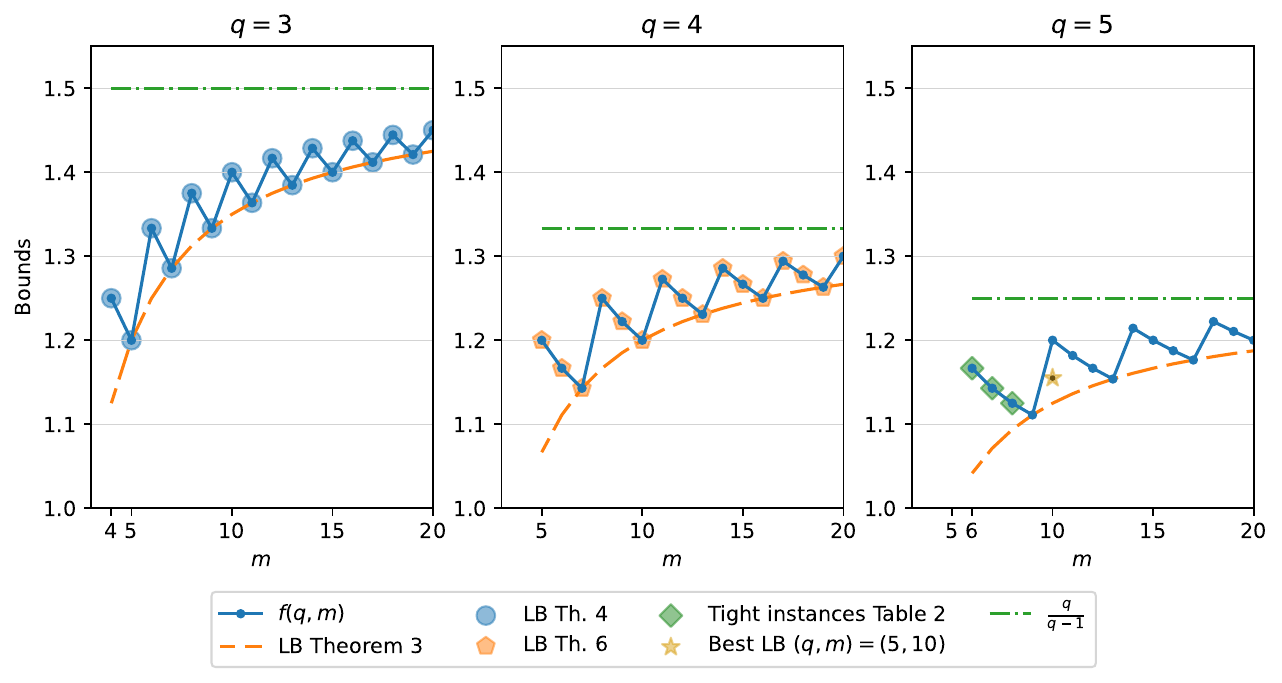}
		\caption{Plotted values of $f(q,\col)$ for FHGs such that every $q$-size core stable coalition structure is $(\col,f(q,\col))$-core stable, together with the obtained lower bounds (LB) in \Cref{sec:lowerBounds}.}
		\label{fig:LB_FHG}
	\end{figure}
	\section{Efficiency}
	\label{sec:efficiency}
	In this section, we study the price of anarchy for core-relaxations of symmetric $\alpha$-hedonic games. Using the same notation as~\citet{FMM21a}, we denote the \emph{social welfare} of a coalition structure $\CC$ by $\text{SW}(\CC) = \sum_{i\in A} u_i(\CC)$, which is simply the sum of the agents' utilities. Moreover, let $\G=(\alpha,A,u)$ represent an instance of an $\alpha$HG. Given an $\alpha$HG $\G$, let $q\text{\textsc{-size Core}}(\G)$ be the set of $q$-size core stable coalition structures, and let $k\text{\textsc{-impr Core}}(\G)$ be the set of $k$-improvement core stable coalition structures. We define the \emph{$q$-size core price of anarchy} of an $\alpha$HG $\G$ as the ratio between the social welfare of the coalition structure $\CC^*(\G)$ that maximizes social welfare, and that of the $q$-size core stable coalition structure with the worst social welfare, i.e., $q\text{\textsc{-size CPoA}}(\G)=\max_{\CC \in q\text{\textsc{-size Core}}(\G)}\frac{SW(\CC^*(\G))}{SW(\CC)}$. Similarly, we define the \emph{$k$-improvement core price of anarchy} as $k\text{\textsc{-impr CPoA}}(\G)=\max_{\CC \in k\text{\textsc{-impr Core}}(\G)}\frac{SW(\CC^*(\G))}{SW(\CC)}$. 
	
	Using \Cref{th:alphaHG}, we can extend the results by~\citet{FMM21a} about the $q$-size core price of anarchy for FHGs.
	\begin{corollary}
		\label{theor:q-CPoA-FHG}
		For any FHG $\G$, it holds that $q\text{\textsc{-size CPoA}}(\G)\leq \frac{2q}{q-1}$, for any integer $q\geq 2$, and this bound is tight.
	\end{corollary}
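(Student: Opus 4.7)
The plan is to combine our refinement in \Cref{cor:general_FHG} with an existing upper bound on the $k$-improvement core price of anarchy that has been established by \citet{FMM21a} for FHGs, and then to invoke their matching lower bound construction for tightness.

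First, I would apply \Cref{cor:general_FHG}, which states that every $q$-size core stable coalition structure in a FHG is also $\frac{q}{q-1}$-improvement stable. This gives the set inclusion $q\text{\textsc{-size Core}}(\G)\subseteq \left(\tfrac{q}{q-1}\right)\text{\textsc{-impr Core}}(\G)$, and therefore
\begin{equation*}
q\text{\textsc{-size CPoA}}(\G) \;\leq\; \left(\tfrac{q}{q-1}\right)\text{\textsc{-impr CPoA}}(\G).
\end{equation*}

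Second, I would invoke the bound of \citet{FMM21a} showing that the $k$-improvement core price of anarchy of any FHG is at most $2k$. Substituting $k=\frac{q}{q-1}$ yields
\begin{equation*}
q\text{\textsc{-size CPoA}}(\G) \;\leq\; 2\cdot \tfrac{q}{q-1} \;=\; \tfrac{2q}{q-1},
\end{equation*}
which proves the claimed upper bound. Note that without \Cref{cor:general_FHG}, the only bound previously available via the results of \citeauthor{FMM21a} was a factor of $2$ larger, which is precisely why the tightness part of their conjecture remained open.

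For tightness, I would cite the lower-bound construction that \citet{FMM21a} used when stating their conjecture: a family of FHG instances admitting $q$-size core stable coalition structures whose social welfare ratio to the welfare-maximizing coalition structure approaches $\frac{2q}{q-1}$. Together with the upper bound above, this establishes the corollary. The main ``obstacle'' here is merely a bookkeeping one, namely verifying that the $2k$ upper bound of \citeauthor{FMM21a} is stated in the exact form required and that their lower bound construction certifies $q$-size core stability (and not just $k$-improvement stability); the mathematical content is essentially a direct reduction from \Cref{cor:general_FHG}.
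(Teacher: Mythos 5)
Your proposal matches the paper's proof essentially verbatim: both derive the inclusion $q\text{\textsc{-size Core}}(\G)\subseteq \left(\tfrac{q}{q-1}\right)\text{\textsc{-impr Core}}(\G)$ from \Cref{cor:general_FHG}, apply the $2k$ bound on the $k$-improvement core price of anarchy from Theorem~8 of \citet{FMM21a} with $k=\tfrac{q}{q-1}$, and obtain tightness from the lower-bound construction in Theorem~9 of the same paper. No gaps; this is the same argument.
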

	\begin{proof}
		By Corollary~\ref{cor:general_FHG}, we know that the social welfare of the worst $q$-size core stable coalition structure is at least the social welfare of the worst $\frac{q}{q-1}$-improvement core stable coalition structure. Moreover, by Theorem~8 by \cite{FMM21a}, we know that the $k$-improvement CPoA of a FHG is upper bounded by $2k$, for any $k\geq 1$. The tightness of the bound follows from Theorem 9 by \cite{FMM21a}.
	\end{proof}
	
	Next, we show upper bounds on the $q\text{\textsc{-size CPoA}}(\G)$ and the $k\text{\textsc{-impr CPoA}}(\G)$ for the following subclass of $\alpha$HGs.
	\begin{definition}
		A function $\alpha \colon [n] \to \mathbb{R}$ is \emph{decreasing} if $\alpha_{q}\geq \alpha_{q+1}$ for all integers $q\geq 1$. Accordingly, an $\alpha$HG is decreasing if $\alpha$ is decreasing.
	\end{definition}
	
	The class of decreasing $\alpha$HGs is distinct from the class of hospitable $\alpha$HGs defined in Definition~\ref{def:hospitable}, but it also contains FHGs, ASHGs, and MFHGs. When restricting our focus to the subclass of decreasing $\alpha$HGs, we can generalize Theorem 8 by \cite{FMM21a} to obtain the same upper bound on the $k$-improvement core price of anarchy.
	\begin{theorem}
		\label{theor:PoA_alpha_improv}
		For any decreasing $\alpha$HG $\G$ and for every $k\geq1$, $k\text{\textsc{-impr CPoA}}(\G)\leq 2k$.
	\end{theorem}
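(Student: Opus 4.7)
The plan is to adapt the iterative charging argument that \citet{FMM21a} used for FHGs (their Theorem~8) to the broader class of decreasing $\alpha$HGs. Fix a $k$-improvement core stable coalition structure $\CC$ and a welfare-maximizing coalition structure $\CC^*(\G)$. I would prove, for each coalition $C^* \in \CC^*(\G)$ separately, that $\mathrm{SW}(C^*) \le 2k \sum_{i \in C^*} u_i(\CC)$, and then sum over the partition $\CC^*(\G)$ to conclude $\mathrm{SW}(\CC^*(\G)) \le 2k \cdot \mathrm{SW}(\CC)$.

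Fix $C^* \in \CC^*(\G)$ with $m = |C^*|$ and let $W(C^*) = \sum_{\{i,j\}\subseteq C^*} u(i,j)$, so that $\mathrm{SW}(C^*) = 2\alpha_m W(C^*)$. The first step is to peel agents one by one: starting from $C_0 = C^*$, at each step $t = 1,\dots,m$ I invoke $k$-improvement core stability of $\CC$ on $C_{t-1}$ to obtain an agent $i_t \in C_{t-1}$ with $u_{i_t}(C_{t-1}) \le k \cdot u_{i_t}(\CC)$, and then set $C_t = C_{t-1}\setminus\{i_t\}$. A telescoping identity on edge weights (deleting $i_t$ from $C_{t-1}$ removes exactly $u_{i_t}(C_{t-1})/\alpha_{|C_{t-1}|}$ of total edge weight) gives $W(C^*) = \sum_{t=1}^{m} u_{i_t}(C_{t-1})/\alpha_{|C_{t-1}|}$. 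After multiplying by $\alpha_m$, the proof reduces to bounding each term $\alpha_m u_{i_t}(C_{t-1})/\alpha_{|C_{t-1}|}$ by $k\cdot u_{i_t}(\CC)$, so that summing over $t$ yields the per-coalition bound and summing over $\CC^*(\G)$ closes the argument.

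For each term I would split on the sign of $u_{i_t}(C_{t-1})$. If $u_{i_t}(C_{t-1}) > 0$, then since $\alpha$ is decreasing and $|C_{t-1}| \le m$ we have $\alpha_m/\alpha_{|C_{t-1}|} \le 1$, so the term is bounded by $u_{i_t}(C_{t-1}) \le k \cdot u_{i_t}(\CC)$ via the peel-step guarantee. If $u_{i_t}(C_{t-1}) \le 0$, the term is non-positive, hence at most $k \cdot u_{i_t}(\CC)$, provided $u_i(\CC) \ge 0$ for every $i \in A$. This last fact is itself a free consequence of $k$-improvement core stability applied to singleton coalitions: since $u_i(\{i\}) = 0$, non-blocking forces $0 \le k \cdot u_i(\CC)$.

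The main obstacle is precisely this sign ambiguity: the inequality $\alpha_m/\alpha_{|C_{t-1}|} \le 1$ flips direction when multiplied by a negative quantity, so one cannot directly use the peel-step guarantee to bound the term by $u_{i_t}(C_{t-1})$. The two-case argument above resolves it cleanly, and the remainder of the proof is structurally identical to the FHG special case, with the decreasing-$\alpha$ assumption playing the role that the explicit formula $\alpha_m = 1/m$ played there.
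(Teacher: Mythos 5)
Your proof is correct and follows essentially the same route as the paper, which simply invokes the iterative charging argument of \citet{FMM21a} (their Theorem~8) and notes that the single inequality comparing $\alpha$ at a sub-coalition to $\alpha$ at the full optimal coalition goes through because $\alpha$ is decreasing --- exactly the role your bound $\alpha_m/\alpha_{|C_{t-1}|}\le 1$ plays. The only cosmetic difference is that you handle negative edge weights by a sign case split together with the observation that $u_i(\CC)\ge 0$ (from singleton stability), where \citeauthor{FMM21a} instead truncate to the positive parts $\delta^>$; both devices are sound and the argument is complete.
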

	\begin{proof}
		The proof is identical to the proof of Theorem 8 by \citet{FMM21a}. Using their notation, with the adapted definition that $\mu^>_{i}(C)=\alpha(|C|)\cdot\delta^>_{C}(i)$, the only required alteration to their proof is that equation (2) in their proof should be replaced by:
		\begin{equation*}
			\mu_{i_t}(C^*_t) = \alpha(|C^*_t|)\cdot\delta^>_{C^*}(i_t)\geq\alpha(|C^*|)\cdot\delta^>_{C^*}(i_t) = \mu_{i_t}^>(C^*),
		\end{equation*}
		which holds, by definition, because we are only considering decreasing $\alpha$HGs.
	\end{proof}
	
	Lastly, we can use a similar reasoning as in the proof of \Cref{theor:q-CPoA-FHG} and use the bounds from Theorems \ref{th:alphaHG} and \ref{theor:PoA_alpha_improv} to obtain a general upper bound on the $q$-size core price of anarchy for decreasing $\alpha$HGs.
	\begin{theorem}
		\label{theor:PoA_alpha_size}
		For any decreasing $\alpha$HG $\G$, $q\text{\textsc{-size CPoA}}(\G) \le 2\cdot\max_{m\geq q+1}f(m,q)$.
	\end{theorem}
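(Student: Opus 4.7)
The plan is to combine the two main tools already established in this section: Corollary~\ref{cor:alphaHG}, which converts $q$-size core stability into $(\col,f(q,\col))$-core stability for every $\col\ge q+1$, and Theorem~\ref{theor:PoA_alpha_improv}, which bounds the $k$-improvement CPoA of a decreasing $\alpha$HG by $2k$. The strategy is essentially the same as in the proof of Corollary~\ref{theor:q-CPoA-FHG}: show that every $q$-size core stable coalition structure is in fact $k$-improvement core stable for a suitable $k$, and then invoke Theorem~\ref{theor:PoA_alpha_improv}.

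Concretely, I would start by fixing a decreasing $\alpha$HG $\G$ and an arbitrary $q$-size core stable coalition structure $\CC$. By definition of $q$-size core stability, no blocking coalition of size at most $q$ exists, so for coalitions of size $\col\le q$, $\CC$ is trivially $(\col,1)$-core stable. For coalitions of size $\col\ge q+1$, Corollary~\ref{cor:alphaHG} guarantees that $\CC$ is $(\col,f(q,\col))$-core stable. Setting
\begin{equation*}
k \;=\; \max_{\col\ge q+1} f(q,\col),
\end{equation*}
and noting that $f(q,\col)\ge 1$ by construction (the outer maximum with $1$ in the definition), we conclude that $\CC$ is $(\col,k)$-core stable for every $\col\ge 1$, i.e., $\CC$ is $k$-improvement core stable. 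Since this holds for every $\CC\in q\text{\textsc{-size Core}}(\G)$, we have the inclusion $q\text{\textsc{-size Core}}(\G)\subseteq k\text{\textsc{-impr Core}}(\G)$, and therefore the worst-case social welfare over $q$-size core stable outcomes is at least as large as the worst-case social welfare over $k$-improvement core stable outcomes, so $q\text{\textsc{-size CPoA}}(\G)\le k\text{\textsc{-impr CPoA}}(\G)$.

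Finally I would apply Theorem~\ref{theor:PoA_alpha_improv}, which yields $k\text{\textsc{-impr CPoA}}(\G)\le 2k$, giving the desired bound
\begin{equation*}
q\text{\textsc{-size CPoA}}(\G)\;\le\;2k\;=\;2\cdot\max_{\col\ge q+1} f(q,\col).
\end{equation*}
There is no genuine obstacle here; the only subtlety worth checking is that the maximum defining $k$ is well defined and finite, which requires noting that $f(q,\col)$ is bounded as $\col$ grows (this is immediate for the concrete classes of interest, e.g., FHGs where $f(q,\col)\le \tfrac{q}{q-1}$, and follows in general from the decreasing assumption on $\alpha$, which keeps the ratios $\alpha_\col/\alpha_q$ bounded as long as the tail of $\alpha$ does not blow up). The argument is essentially a plug-and-play combination of the two corollaries, with the decreasing assumption being used solely through the invocation of Theorem~\ref{theor:PoA_alpha_improv}.
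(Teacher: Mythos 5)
Your proposal is correct and follows exactly the route the paper intends: the paper gives only a one-line proof sketch for this theorem, saying to argue ``as in the proof of Corollary~\ref{theor:q-CPoA-FHG}'' by combining the $(\col,f(q,\col))$-core stability from \Cref{th:alphaHG} with the $2k$ bound of \Cref{theor:PoA_alpha_improv}, which is precisely what you do. One small caveat: your side remark that the decreasing assumption keeps $\max_{\col\ge q+1}f(q,\col)$ finite is not right in general (for ASHGs, which are decreasing, $f(q,\col)=1+\lfloor\frac{\col-2}{q-1}\rfloor$ is unbounded in $\col$), but this does not harm the argument since the stated inequality is then vacuously true, and in any fixed instance the relevant maximum is over $\col\le n$ and hence finite.
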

	Note that this result implies a core price of anarchy of 2 for MFHGs, where the core price of anarchy of an $\alpha$HG $\G$ is simply defined as $\max_q q\text{\textsc{-size CPoA}}(\G)$. As such, we answered an open question by \citet{MMV20a}, who found a lower bound on the core price of stability of 2 and an upper bound for the core price of anarchy of 4 in MFHGs.
	
	\begin{corollary}
		For any MFHG, the core price of anarchy is upper bounded by 2.
	\end{corollary}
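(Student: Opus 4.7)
The plan is to apply \Cref{theor:PoA_alpha_size} to the specific case of MFHGs after verifying that the hypothesis holds and that the factor $f(q,m)$ collapses to $1$ in this setting. First I would observe that MFHGs are decreasing $\alpha$HGs: since $\alpha_m = \tfrac{1}{m-1}$ for $m \geq 2$, we have $\alpha_{m+1} = \tfrac{1}{m} \leq \tfrac{1}{m-1} = \alpha_m$, so the decreasing hypothesis required by \Cref{theor:PoA_alpha_size} is satisfied. This lets us invoke the bound $q\text{\textsc{-size CPoA}}(\G) \leq 2 \cdot \max_{m \geq q+1} f(q,m)$.

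Next I would compute $f(q,m)$ explicitly for MFHGs using the expression in \Cref{cor:alphaHG}. Writing $r = (m-1) \bmod (q-1)$ and $k = \lfloor (m-1)/(q-1) \rfloor$, so that $m-1 = k(q-1) + r$, and using $\alpha_m/\alpha_q = (q-1)/(m-1)$ together with $\alpha_m/\alpha(r+1) = r/(m-1)$ when $r > 0$, the formula becomes
\begin{equation*}
f(q,m) = \max\!\left(1,\; \frac{k(q-1)}{m-1} + \ind(r \neq 0)\cdot\frac{r}{m-1}\right) = \max\!\left(1,\; \frac{m-1-r}{m-1} + \frac{r}{m-1}\right) = 1,
\end{equation*}
where in the $r = 0$ case the indicator term vanishes and the first term already equals $1$. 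The cancellation here is the crucial structural fact: the MFHG weights are precisely calibrated so that the unmatched-agent correction exactly compensates for the loss in the main term, yielding $f(q,m) = 1$ uniformly.

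Plugging this into \Cref{theor:PoA_alpha_size} gives $q\text{\textsc{-size CPoA}}(\G) \leq 2$ for every $q \geq 2$, and hence the core price of anarchy, which is the maximum of $q\text{\textsc{-size CPoA}}(\G)$ over $q$, is at most $2$. I do not expect any significant obstacle here: the only non-trivial step is the algebraic verification that $f(q,m) = 1$ for MFHGs, and this is a one-line consequence of the identity $m - 1 = k(q-1) + r$ combined with the specific form of $\alpha$. All the heavy lifting has already been done in \Cref{th:alphaHG} and \Cref{theor:PoA_alpha_size}.
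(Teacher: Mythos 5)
Your proposal is correct and follows essentially the same route as the paper, which derives this corollary directly from \Cref{theor:PoA_alpha_size} together with the fact that $f(q,\col)=1$ for MFHGs. Your explicit verification that the floor and remainder terms recombine to give $f(q,m)=1$ is exactly the calculation the paper leaves implicit (it is recorded in \Cref{ta:summary} and follows the same cancellation as in \Cref{cor:core_char}), so there is no substantive difference in approach.
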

	
	\section{Applications for Novel Hedonic Games Models}
	\label{sec:novel}
	To illustrate the usefulness of our main result, we will introduce two hedonic game variants that can be modeled as an $\alpha$HG which have been unstudied to the best of our knowledge. We will show how our result can be used to easily prove the existence of core stable coalition structures.
	
	\subsection{Pairwise Communication}
	\label{subsec:pairwise_comm}
	Consider a setting where members in a coalition have to communicate extensively with each other individual member of the coalition. When all form of communication within a coalition is costly for each member of the coalition, one could model the individual utility of belonging to a coalition to be proportional to the total amount of pairwise communication in that coalition. Note the difference of this setting with MFHGs, in which agents only care about their personal amount of pairwise communication. Such considerations might arise, for example, in governmental coalition formation. When multiple political parties are discussing the formation of a government, pairwise communication between parties is important  to discuss strategies and compromises. Larger coalitions will require more pairwise alignment between parties, causing the coalition formation process to advance slower and to have a larger risk of failing.
	
	As the total number of edges in a complete graph of size~$\col$ equals $\frac{\col(\col-1)}{2}$, we can model this setting as an $\alpha$HG with $\alpha_\col=\frac{2}{\col(\col-1)}$, where $\alpha_1=1$. Note that this is a hospitable $\alpha$HG. 
	
	Our results directly imply the existence of a core stable coalition structure, and an upper bound on the price of anarchy of 2.
	
	\begin{corollary}
		\label{cor:application}
		A core-stable coalition structure always exists in $\frac{2}{\col(\col-1)}$HGs.
	\end{corollary}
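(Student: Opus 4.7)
The plan is to obtain this as a direct application of \Cref{cor:core_char}, which states that any $\alpha$HG satisfying $\frac{(m-1)\alpha_m}{\alpha_2}\leq 1$ admits a core stable coalition structure. All that remains is to verify this inequality for the particular $\alpha$ function $\alpha_\col = \frac{2}{\col(\col-1)}$ (with $\alpha_1=1$) associated with the pairwise communication model.

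First, I would compute the base value $\alpha_2 = \frac{2}{2\cdot 1} = 1$. Then, for any integer $\col\geq 2$, substituting $\alpha_\col$ and $\alpha_2$ into the hypothesis of \Cref{cor:core_char} yields
\begin{equation*}
\frac{(\col-1)\alpha_\col}{\alpha_2} \;=\; (\col-1)\cdot\frac{2}{\col(\col-1)} \;=\; \frac{2}{\col} \;\leq\; 1,
\end{equation*}
where the last inequality holds precisely because $\col\geq 2$. Hence the hypothesis of \Cref{cor:core_char} is satisfied, and invoking that corollary yields the existence of a core stable coalition structure.

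There is effectively no obstacle here: the entire argument reduces to a one-line arithmetic check, because the structural work has already been done in \Cref{prop:2_size_exists} and \Cref{cor:core_char}. The only care needed is to note that the coefficient $\alpha_1$ does not enter the inequality (the relevant comparison is only between $\alpha_\col$ for $\col\geq q+1=3$ and $\alpha_q=\alpha_2$), so the convention $\alpha_1=1$ does not affect the conclusion. This illustrates exactly the ``ease-of-use'' point the section advertises: once \Cref{th:alphaHG} (and its \Cref{cor:core_char} consequence) is available, verifying core existence for a newly introduced $\alpha$HG variant amounts to checking a single inequality on the coefficient sequence $(\alpha_\col)_\col$.
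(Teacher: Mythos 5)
Your proof is correct and follows exactly the paper's own argument: both reduce the claim to the inequality $\frac{(\col-1)\alpha_\col}{\alpha_2}\le 1$ from \Cref{cor:core_char} and verify it by the same one-line computation $\frac{(\col-1)\alpha_\col}{\alpha_2}=\frac{2}{\col}\le 1$. No differences worth noting.
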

	\begin{proof}
		By \Cref{cor:core_char} it is sufficient to show that $\frac{(m-1) \alpha_m}{\alpha_2} \le 1$ for all $m \ge 3$. As $\frac{(m-1) \alpha_m}{\alpha_2} = (m-1) \frac{2}{m (m-1)} = \frac{2}{m} \le \frac{2}{3}$, this always holds. 
	\end{proof}
	
	\begin{corollary}
		For any $\frac{2}{\col(\col-1)}$HG, the core price of anarchy is upper bounded by 2.
	\end{corollary}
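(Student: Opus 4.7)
The plan is to reduce the claim to a direct application of \Cref{theor:PoA_alpha_improv} with $k=1$, once one observes that core stability and $1$-improvement core stability coincide by their very definitions.

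First, I would verify the hypothesis of \Cref{theor:PoA_alpha_improv}: the sequence given by $\alpha_1=1$, $\alpha_2 = 1$, and $\alpha_\col = \frac{2}{\col(\col-1)}$ for $\col \geq 3$ is non-increasing, since $\alpha_{\col+1}/\alpha_\col = \frac{\col-1}{\col+1} \leq 1$ for all $\col \geq 2$. Hence the pairwise communication game is a decreasing $\alpha$HG. Second, I would invoke the crucial but purely definitional observation that \emph{core stable} (``for any coalition $C$, $u_i(C) \leq u_i(\CC)$ for at least one $i \in C$'') is exactly the $k=1$ case of \emph{$k$-improvement core stable}. Consequently, the set of core stable coalition structures equals the set of $1$-improvement core stable ones, and the core price of anarchy of $\G$ equals $1\text{\textsc{-impr CPoA}}(\G)$.

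Third, \Cref{cor:application} guarantees that a core stable coalition structure exists, so both relevant sets are non-empty and the corresponding prices of anarchy are well-defined real numbers. Applying \Cref{theor:PoA_alpha_improv} with $k=1$ then yields $1\text{\textsc{-impr CPoA}}(\G) \leq 2 \cdot 1 = 2$, which is exactly the desired upper bound on the core price of anarchy.

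There is no real obstacle here; the whole argument is a short chain of earlier results, and the main ``insight'' is simply spotting that the $k=1$ improvement notion is the classical core. An alternative route would be to go through \Cref{theor:PoA_alpha_size} by verifying that $f(m,q) \leq 1$ for every $m \geq q+1$ in this $\alpha$HG, which is plausible given how rapidly $\alpha_\col$ decays but requires some case analysis on whether $(q-1) \mid (m-1)$; the improvement-based route above is strictly cleaner.
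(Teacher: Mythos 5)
Your argument is correct. The key identification --- that \emph{core stable} is literally the $k=1$ case of \emph{$k$-improvement core stable}, so the core price of anarchy equals $1\text{-\textsc{impr} CPoA}$ --- is valid, the game is indeed decreasing (since $\alpha_{\col+1}/\alpha_\col = \frac{\col-1}{\col+1} \le 1$ and $\alpha_1=\alpha_2=1$), and \Cref{theor:PoA_alpha_improv} with $k=1$ then gives the bound of $2$ immediately. The paper leaves this corollary unproved, but the surrounding text (and the parallel statement for MFHGs) indicates that the intended derivation runs through the $q$-size machinery: one uses \Cref{cor:core_char} (via the computation in \Cref{cor:application}) to see that $2$-size core stability already implies core stability here, and then invokes \Cref{theor:PoA_alpha_size} with $q=2$, where $f(2,\col)=\frac{2}{\col}\cdot\frac{\col(\col-1)}{2(\col-1)}\cdot\ldots$ collapses to at most $1$. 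Your route via the improvement-based theorem sidesteps any evaluation of $f(q,\col)$ and is arguably cleaner; the paper's route has the advantage of reusing the existence argument already established for this game and of not needing the (easy but easily overlooked) observation that the $k$-improvement definition at $k=1$ coincides with the classical core. Both are complete proofs, and your remark that \Cref{cor:application} guarantees non-emptiness (so the price of anarchy is well defined) is a detail the paper glosses over.
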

	
	\subsection{Odd vs.\ Even}
	\label{subsec:odd_even}
	As a second possible hedonic game consider the following: in each coalition, a maximal matching is selected uniformly at random (because the agents should work in pairs, for example). The utility of an agent is their expected utility of their matching partner (with an agent getting a utility of $0$ if they are unmatched). This equivalently leads to an $\alpha$HG with
	\begin{equation}
		\alpha(m) = \begin{cases}
			\frac{1}{m-1} &\text{if $m$ is even,}\\
			\frac{1}{m} &\text{if $m$ is odd.}
			\label{eq:odd_even}
		\end{cases}
	\end{equation}
	In other words, $\alpha(m)$ is the same as in MFHGs when $\col$ is even, and the same as in FHGs when $\col$ is uneven. 
	
	Note that this is not a hospitable $\alpha$HG (\eg $\frac{\alpha_5}{\alpha_4}=0.6<\frac{3}{4}$). In fact, these hedonic games show that the bound derived in \Cref{cor:alphaHG} can be non tight for non-hospitable $\alpha$HGs. To see this, we first observe that any $2$-size core stable outcome is core stable following \Cref{cor:core_char}. However, for $4$-size stable coalition structures, \Cref{cor:alphaHG} only implies $(6,1.2)$-core stability (and not ``pure'' core stability). 
	
		
		\section{Conclusion and Outlook}
		\label{sec:conclusion}
		We studied hedonic games and the relationship between different relaxed notions of core stability. Most importantly, for a large class of hedonic games, we obtained a general upper bound $f(q,m)$ such that every $q$-size core stable outcome is $(m, f(q,m))$-core stable. That is, a coalition of size $m$ can deviate at most by a factor of $f(q,m)$. Our bound also allows us to answer a conjecture by \citet{FMM21a} that every $q$-size core stable outcome in symmetric fractional hedonic games is $\frac{q}{q-1}$-improvement core stable. Finally, we also obtain some lower bounds. However, even for fractional and additively separable hedonic games, our bounds are not tight yet. For both, we were only able to show the tightness up to $q = 4$, with the smallest open case being $q=5$ and $\col=10$. 
		For both kinds of hedonic games, our integer linear programming approach was not able to construct a counterexample, nor show that no counterexample exists. 
		
		Improving our lower bounds, not only for fractional and additively separable hedonic games, but also for the general class of hospitable hedonic games, seems like a challenging and interesting venue for future work. Further, it would be interesting to see if the generalization of $\alpha$-hedonic games could see application in other areas of hedonic games as well. For instance, it might be interesting to classify for which $\alpha$-hedonic games certain dynamics converge (see for instance \citet{BBK23a}), to generalize recent algorithmic results from ASHGs \citep{BuRo23a, BuRo24a}, or to extend results on single-agent stability \citep{BBT24a, ABS13a, SuDi10a} concepts to $\alpha$-hedonic games.

		\setlength{\parskip}{10pt}
		\footnotesize \textbf{Acknowledgements} Tom Demeulemeester is funded by PhD fellowship 11J8721N of Research Foundation - Flanders (FWO) and by the Swiss National Science Foundation (SNSF) through Project 100018$\_$212311. Jannik Peters was supported by Deutsche Forschungsgemeinschaft (DFG) under the grant BR 4744/2-1 and the Graduiertenkolleg “Facets of Complexity” (GRK 2434) as well as by the Singapore Ministry of Education under grant
		number MOE-T2EP20221-0001.
		\setlength{\parskip}{0pt}
		
		\normalsize
		\bibliographystyle{apalike} 
		\bibliography{abb, algo}
		
		\appendix
		
		\section{Proof of \texorpdfstring{\Cref{theorem:LB_4}}{Theorem 6}}
		\label{ap:proof_FHG_4}
		
		\begin{proof}
			Assume we are given $\col$ agents $a_1, \dots, a_\col$ with $u_i(\CC) = 1$. We partition the set of agents into two sets of agents. We call the agents $a_1, \dots, a_{\left\lfloor \frac{\col-2}{3}\right\rfloor + 1}$
			the \emph{two-valued} agents and the agents
			$a_{\left\lfloor \frac{\col-2}{3}\right\rfloor + 2}, \dots, a_\col$
			the \emph{one-valued} agents. 
			For each \begin{itemize}
				\item $i,j \in \left[\left\lfloor \frac{\col-2}{3}\right\rfloor + 1\right]$ we set $u(i,j) = 0$ ;
				\item $i\in \left[\left\lfloor \frac{\col-2}{3}\right\rfloor + 1\right], j \in \left[\left\lfloor \frac{\col-2}{3}\right\rfloor + 2, \col\right]$ we set $u(i,j) = 2$;
				\item $i,j \in \left[\left\lfloor \frac{\col-2}{3}\right\rfloor + 2, \col\right]$ we set $u(i,j) = 1$.
			\end{itemize}
			First, we see that this coalition is 2-size stable. To show that it is 3-size stable, we distinguish four cases and show that in each case there is an agent with a utility of at most $1$ and thus no blocking coalition can be formed. Let $C = \{a_i,a_j,a_k\}$, 
			\begin{itemize}
				\item If all three agents are two-valued agents, they all have utility 0.
				\item If two of the agents are two-valued agents, they have utility $\frac{2}{3}$.
				\item If only one agent is a two-valued agent, the one-valued agents have utility $1$.
				\item If none of the agents are two valued agents, each of the agents has utility $\frac{2}{3}$.
			\end{itemize}
			A similar reasoning holds for coalitions of size four:
			\begin{itemize}
				\item If all four agents are two-valued agents, they all have utility 0.
				\item If three are two-valued agents, they have utility  $\frac{1}{2}$.
				\item If two of the agents are two-valued agents, they have utility $1$.
				\item If only one agent is a two-valued agent, the one-valued agents have utility $\frac{2 + 1 + 1}{4} = 1$
				\item If none of the agents are two valued agents, each of the agents has utility $\frac{3}{4}$.
			\end{itemize}
			Hence, this coalition structure is $4$-size stable. Further, we see that in the coalition compromised of all agents, all two-valued agents get utility
			\begin{align*}
				\frac{2 (\col - \left\lfloor \frac{\col-2}{3}\right\rfloor - 1)}{\col} \ge 1 + \frac{\left\lfloor\frac{1}{3}(\col-2)\right\rfloor}{\col}.
			\end{align*} 
			Finally, any one-valued agent gets a utility of 
			\begin{align*}
				\frac{2 (\left\lfloor \frac{\col-2}{3}\right\rfloor + 1 ) + (\col - \left\lfloor \frac{\col-2}{3}\right\rfloor - 2) }{\col} = 1 + \frac{\left\lfloor\frac{1}{3}(\col-2)\right\rfloor}{\col}
			\end{align*} 
			Hence, a $4$-size stable coalition is not $\left(\col, 1+\frac{\left\lfloor\frac{1}{3}(\col-2)\right\rfloor}{\col}-\varepsilon\right)$-stable for any $\varepsilon>0$.
		\end{proof}

		\section{Proof of \texorpdfstring{\Cref{theor:ASHG_4}}{Theorem 7}}
		\label{ap:proof_ASHG_4}
		\begin{proof}
			To prove this result, we propose three different constructions, depending on the value of $\col\bmod 3$. First, when $\col\bmod 3 = 1$, the result follows from \Cref{thrm:mod_tight}.
			
			Second, when $\col \bmod 3 = 0$, we get that $1 + \lfloor \frac{m-2}{3} \rfloor = \frac{m}{3}$. For our construction, we divide the agents into two groups $A_1, A_2$ with sizes $\frac{2m}{3}$ and $\frac{m}{3}$, respectively. Let the utilities of the agents be equal to $u_i(\CC) = 1$ for any $a_i \in A_1$ and $u_i(\CC) = 2$ for any $a_i \in A_2$. Further, let $u(i,j) = 1$ for any $i \in A_1$ and $j \in A_2$, and let $u(k,\ell)=0$ for all other edges.
			
			This structure is obviously $2$-size stable. For $3$-size stability we observe that any blocking coalition of size $3$ could not contain an agent from $A_2$ since they could at most get a utility of $2$. Hence, it should contain three agents from $A_1$ who get a utility of $0$, and thus do not form a blocking coalition either.
			
			Moreover, we see that a coalition is clearly not $4$-size blocking when it contains four agents from $A_1$, or when it contains three or four agents from $A_2$. In a coalition containing three agents from $A_1$, the utility of the agents in $A_1$ equals 1, and hence this coalition is not blocking. Moreover, in a coalition containing two agents from $A_1$, the utility of the agents in $A_2$ equals 2, which implies that this coalition is not blocking either.
			
			Finally, we observe that in $A_1 \cup A_2$ the agents in $A_1$ experience a utility of $\frac{m}{3}$ and the agents in $A_2$ a utility of $\frac{2m}{3}$, as desired.
			
			Third, when $\col\bmod 3 = 2$, we observe that $1 + \lfloor \frac{m-2}{3}\rfloor = 1 + \frac{m-2}{3}$. Now we consider a coalition structure~$\CC$ and a coalition~$C$ of size~$\col$ that contains two types of agents: $A_1 = \{a_1, \ldots, a_{\frac{\col-2}{3}}\}$, and $A_2 = \{a_{\frac{\col-2}{3}+1},\ldots, a_{\col}\}$. Let $u_i(\CC) = 2$ for $a_i\in A_1$ and $u_j(\CC) = 1$ for $a_j\in A_2$. Moreover, let $u(i,j) = 1$ for $\{a_i,a_j\}\in A_1 \times A_2$. Further, for any $i \in [\frac{(\col-2)}{3} + 1]$, we set $u(\frac{\col-2}{3}+2i-1,\frac{\col-2}{3}+2i) = 1$. Hence, every agent in $A_2$ has exactly one other agent in $A_2$ they give a utility of $1$ to. We again observe that this coalition structure is 2-size stable. Any $3$-size blocking coalition cannot contain an agent from $A_1$, which in turn implies that the agents in $A_2$ can get a utility of at most $1$ from the blocking coalition. 
			
			Lastly, any coalition of size four which contains three or four agents from $A_1$ or which contains four agents from $A_2$ is clearly not $4$-size blocking. In a coalition containing two agents from $A_1$, those agents experience a utility of 2, which implies that this coalition is not blocking. Moreover, in a coalition containing one agent from $A_1$, there is at least one agent in $A_2$ who experiences a utility of at most 1, and hence this coalition is not blocking either. 
			
			Finally, we observe that every agent in $A_1$ gets a utility of $2(1 + \frac{m-2}{3})$ and every agent in $A_2$ a utility of $1 + \frac{m-2}{3}$.
		\end{proof}
		
		\section{Tight Examples}\label{app:tight_figures}
		Figures \ref{fig:tight1} and \ref{fig:tight2} contain tight examples for FHGs when $q=5$ and $m\in\{7,8\}$, and Figures \ref{fig:tight3} and\ref{fig:tight4} contain tight examples for ASHGs when  $q=5$ and $m\in\{7,8\}$.
		%
		%

		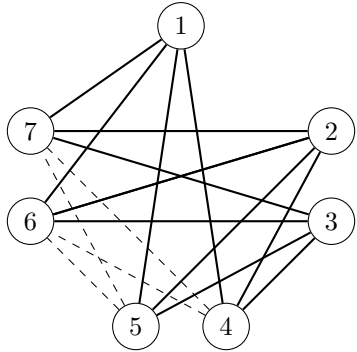
\begin{figure}[h]
			\centering
			\begin{tikzpicture}
				\node[circle,draw] (1) at (2,4) {1};
				\node[circle,draw] (6) at (0,1.4) {6};
				\node[circle,draw] (2) at (4,2.6) {2};
				\node[circle,draw] (5) at (1.4,0) {5};
				\node[circle,draw] (4) at (2.6,0) {4};
				\node[circle,draw] (3) at (4,1.4) {3};
				\node[circle,draw] (7) at (0,2.6) {7};
				
				\draw[thick] (1) -- (7);
				\draw[thick] (1) -- (6);
				\draw[thick] (1) -- (5);
				\draw[thick] (1) -- (4);
				\draw[thick] (2) -- (7);
				\draw[thick] (2) -- (6);
				\draw[thick] (2) -- (5);
				\draw[thick] (2) -- (4);
				\draw[thick] (3) -- (7);
				\draw[thick] (3) -- (6);
				\draw[thick] (3) -- (5);
				\draw[thick] (3) -- (4);
				\draw[thick] (2) -- (6);
				
				\draw[dashed] (4) -- (7);
				\draw[dashed] (4) -- (6);
				\draw[dashed] (5) -- (7);
				\draw[dashed] (5) -- (6);
			\end{tikzpicture}
			\caption{Tight example for FHGs for $q = 5$ and $m = 7$. Solid edges indicate a weight of $2$ while dashed edges indicate a weight of $1$. Each agent has an initial utility of $1$.}
			\label{fig:tight1}
			
		\end{figure}
		\begin{figure}[h]
			\centering
			\begin{tikzpicture}
				\node[circle,draw] (1) at (0,4) {8};
				\node[circle,draw] (2) at (0,3) {7};
				\node[circle,draw] (3) at (2,3) {2};
				\node[circle,draw] (4) at (0,2) {6};
				\node[circle,draw] (5) at (0,1) {5};
				\node[circle,draw] (6) at (2,4) {1};
				\node[circle,draw] (7) at (2,1) {4};
				\node[circle,draw] (8) at (2,2) {3};
				
				\draw[thick] (1) -- (2);
				\draw[thick] (1) -- (6);
				\draw[thick] (2) -- (4);
				\draw[thick] (3) -- (6);
				\draw[thick] (3) -- (8);
				\draw[thick] (4) -- (5);
				\draw[thick] (5) -- (7);
				\draw[thick] (7) -- (8);
				
			\end{tikzpicture}
			\caption{Tight example for FHGs for $q = 5$ and $m = 8$. Solid edges indicate a weight of $2$ while no edge indicates a weight of~$1$. Each agent has an initial utility of $1$.}
			\label{fig:tight2}
		\end{figure}

		\begin{figure}[H]
			\centering
			\begin{tikzpicture}
				\node[circle,draw] (2) at (2,4) {2};
				\node[circle,draw] (7) at (0,1.2) {7};
				\node[circle,draw] (3) at (4,2.8) {3};
				\node[circle,draw] (6) at (1.2,0) {6};
				\node[circle,draw] (5) at (2.8,0) {5};
				\node[circle,draw] (4) at (4,1.2) {4};
				\node[circle,draw] (1) at (0,2.8) {1};
				
				\draw[thick] (1) -- (2);
				\draw[thick] (2) -- (3);
				\draw[thick] (7) -- (1);
				
				\draw[dashed] (3) -- (4);
				\draw[dashed] (3) -- (5);
				\draw[dashed] (4) -- (5);
				\draw[dashed] (5) -- (6);
				\draw[dashed] (6) -- (7);
			\end{tikzpicture}
			\caption{Tight example for ASHGs for $q = 5$ and $m = 7$. Solid edges indicate a weight of $2$ while dashed edges indicate a weight of $1$. Agents $1,2$, and $3$ have an initial utility of $2$ and agents $4,5,6$, and $7$ have an initial utility of $1$.}
			\label{fig:tight3}
			
		\end{figure}

		\begin{figure}[H]
			\centering
			\begin{tikzpicture}
				\node[circle,draw] (1) at (2,4) {1};
				\node[circle,draw] (6) at (0,1.2) {7};
				\node[circle,draw] (2) at (4,2.8) {2};
				\node[circle,draw] (5) at (4,1.2) {4};
				\node[circle,draw] (4) at (2.8,0) {5};
				\node[circle,draw] (3) at (2,2) {3};
				\node[circle,draw] (7) at (0,2.8) {8};
				\node[circle,draw] (8) at (1.2,0) {6};
				
				\draw[thick] (1) -- (2);
				
				\draw[dashed] (1) -- (3);
				\draw[dashed] (1) -- (7);
				\draw[dashed] (2) -- (4);
				\draw[dashed] (2) -- (5);
				\draw[dashed] (3) -- (5);
				\draw[dashed] (3) -- (6);
				\draw[dashed] (3) -- (8);
				\draw[dashed] (4) -- (8);
				\draw[dashed] (7) -- (6);
				
			\end{tikzpicture}
			\caption{Tight example for ASHGs for $q = 5$ and $m = 8$. Solid edges indicate a weight of $2$ while dashed edges indicate a weight of $1$. Agents $1,2$, and $3$ have an initial utility of $2$ and agents $4,5,6,7$, and $8$ have an initial utility of $1$.}
			\label{fig:tight4}
		\end{figure}

		\vspace*{\fill}

	\end{document}